\documentclass[aps,pra,twocolumn,10pt,superscriptaddress]{revtex4-1}

\usepackage[english]{babel}
\usepackage[T1]{fontenc} 
\usepackage{times}

\usepackage{amsthm} 
\usepackage{amssymb} 
\usepackage{amsmath} 
\usepackage{mathbbol} 
\usepackage[normalem]{ulem}
\usepackage[usenames,dvipsnames]{color}
\usepackage{graphicx}
\usepackage{overpic}
\usepackage{paralist} 
\usepackage{mathtools} 
\usepackage{complexity}

\newtheorem{theorem}{Theorem}
\newtheorem{lemma}[theorem]{Lemma}

\newtheorem{definition}[theorem]{Definition}


\newcommand{\e}{\mathrm{e}}
\renewcommand{\i}{\mathrm{i}}
\newcommand{\1}{\mathbb{1}}
\DeclareMathOperator{\landauO}{\mathrm{O}}
\DeclareMathOperator{\Tr}{Tr}
\newcommand\intd{\mathrm{d}}
\DeclareMathOperator\fid{F}
\newcommand{\id}{\mathrm{id}}
\DeclareMathOperator{\rank}{\operatorname{rank}}
\renewcommand{\Re}{\operatorname{Re}}
\newcommand{\imag}{\Im \text{m}}
\newcommand{\Trot}{\mathrm{Trot}}



\newcommand{\mc}[1]{\mathcal{#1}}
\renewcommand{\L}{\mc{L}}
\renewcommand{\K}{\mc{K}}

\newcommand{\ket}[1]{\left|{#1}\right\rangle}
\newcommand{\Lket}[1]{\left|{#1}\right\rangle\!\rangle}

\newcommand{\bra}[1]{\left\langle{#1}\right|}

\newcommand{\braket}[2]{\left\langle #1 \middle| #2 \right\rangle}

\newcommand{\ketbra}[2]{\ket{#1} \!\! \bra{#2}}

\newcommand{\Scp}[2]{\ensuremath{\, \langle #1 \,, #2 \,\rangle}}
\newcommand\Abs[1]{\ensuremath{| #1|}}
\newcommand{\norm}[1]{\left\Vert #1\right\Vert}
\newcommand{\normn}[1]{\lVert #1\rVert}
\newcommand{\normb}[1]{\bigl\Vert #1\bigr\Vert}
\newcommand{\dnorm}[1]{\left\Vert #1\right\Vert_{\diamond}}
\newcommand{\dnormb}[1]{\bigl\Vert #1\bigr\Vert_{\diamond}}

\newcommand{\argdot}{{\; \cdot \;}}
\newcommand{\kw}[1]{\frac{1}{#1}}
\newcommand{\tkw}[1]{\tfrac{1}{#1}}

\newcommand{\Heven}{H_{e}}
\newcommand{\Hodd}{H_{o}}
\newcommand{\Heo}{H_{\nu}}
\newcommand{\Aeven}{\mathcal{H}_{e}}
\newcommand{\Aodd}{\mathcal{H}_{o}}
\newcommand{\Aeo}{\mathcal{H}_{\nu}}
\newcommand{\Deven}{\mathcal{D}_{e}}
\newcommand{\Dodd}{\mathcal{D}_{o}}
\newcommand{\Infid}{\mathcal{I}}


\newcommand{\locT}[1][l]{A^{[#1]}}
\newcommand{\locTel}[2]{A^{#1}_{#2}}
\newcommand{\dw}{\delta}
\newcommand{\app}[1]{App.~\ref{#1}}

\definecolor{martin}{rgb}{0,.1,1}

\definecolor{jens}{RGB}{40,180,40}

\definecolor{ulmcolor}{rgb}{.4,0,.6}

\definecolor{ulmmethod}{rgb}{0,.3,.6}

\definecolor{scalingcol}{rgb}{.7,.3,0}

\definecolor{albert}{rgb}{0.77,0.29,0.55}

\definecolor{dangercolor}{rgb}{0.8,0.,0.}

\usepackage[normalem]{ulem} 
\usepackage{cancel}

\newcommand{\fu}{Dahlem Center for Complex Quantum Systems, Freie Universit\"{a}t Berlin, 14195 Berlin, Germany}
\newcommand{\ulm}{Institute for Complex Quantum systems (ICQ), Universit{\"a}t Ulm, 89069 Ulm, Germany}
\newcommand{\colorado}{Department of Physics, Colorado School of Mines, Golden, Colorado 80401, USA}

\begin{document}

\title{A positive tensor network approach for simulating open quantum many-body systems}

\author{A.\ H.\ Werner}
\affiliation{\fu}
\author{D.\ Jaschke}
\affiliation{\ulm}
\affiliation{\colorado}
\author{P.\ Silvi}
\affiliation{\ulm}
\author{M.\ Kliesch}
\affiliation{\fu}
\author{T.\ Calarco}
\affiliation{\ulm}
\author{J.\ Eisert}
\affiliation{\fu}
\author{S.\ Montangero}
\affiliation{\ulm}

\begin{abstract}
Open many-body quantum systems play an important role in quantum optics and condensed-matter physics,
and capture phenomena like transport,
interplay between Hamiltonian and incoherent
dynamics, and topological order generated by dissipation.
We introduce a versatile and practical method to numerically simulate one-dimensional
open quantum many-body dynamics using tensor networks. It is based on representing mixed quantum states in a locally purified form, which
 guarantees that positivity is preserved at all times. Moreover, the approximation error is controlled with respect to the trace norm.
Hence, this scheme overcomes various obstacles of the known numerical open-system evolution schemes.
To exemplify the functioning of the approach, we study both stationary states and transient dissipative behaviour,
for various open quantum systems ranging from few to many bodies.
\end{abstract}

\maketitle

Open quantum systems are ubiquitous in physics.
To some extent any quantum system is coupled to an environment,
and in many instances this interaction significantly alters the system's dynamics.
Traditionally such decoherence processes are seen as an enemy to coherent state manipulation.
However, suitably engineered dissipation can also have beneficial effects and can be exploited for state preparation
\cite{DissipationZoller,DissipationComputing,DissipationEntanglement,DissipationOld,DissipationCritical,Timing},
even of states containing strong entanglement or featuring topological order \cite{DissipationTopologyZoller,TopologyDissipationLong}.
In condensed matter physics, many concepts such as transport are often studied within the closed systems paradigm, but, it is becoming
increasingly  clear that some familiar concepts may have to be revisited in the open system setting  \cite{ProsenDissipativeStationaryXXZ}, where the interplay between coherent quantum many-body and open systems dynamics, i.e. the competition between Hamiltonian interactions and dissipation leads to interesting physical effects.
Since few analytical methods are available for such systems, the design of novel numerical tools for the simulation of dissipative quantum many-body systems is of the utmost importance.
In this work, we present a new algorithm which captures the open many-body dynamics in one spatial dimension
-- for both transient and steady regimes -- based on a locally purified tensor network ansatz-class. It comprises a new approach in that the positivity of the operators is maintained  during the whole simulation.
Importantly, the approximation errors can be controlled in a way that yields a trace-norm certificate.
Hence, the algorithm provides not only a conceptually
new approach to the problem, but also combines several desired features of the existing schemes and overcomes previous limitations.

Tensor-network ansatz-classes
have proven to be widely successful to capture the physics of many-body states
\cite{FCS,MPSReps,Murg,MPSRev,Flow,AreaReview}. They rely on the idea that relevant quantum states lie in the
very small sub-manifold with local correlations, which in turn can be efficiently captured in terms of tensor networks.
The density-matrix renormalisation method \cite{DMRGWhite92} can indeed be viewed as a variational principle over {\it matrix-product states} \cite{MPSZero,MPSReps,MPSRev,Lagrangian,orus2013}.
Generalising these ideas, a number of exciting methods have been proposed \cite{Renegade,RizMera,Adaptive,MERAF1,MERAF3,MERArealaw,Lubasch2014}, some of which
also allow to study open quantum systems. In most cases
{\it matrix-product operators} (MPO) are at the heart of these methods. Indeed, several variants have already been developed
\cite{Mixed,ZwolakVidal,LaeuchliDissipative,DaleyOpenReview,KarraschDrude,KarraschTransport,PizornfiniteTemp,JianMariCarmen,CaiBarthel},
many of which exploit
the well-known features of tensor network ansaetze to encode the mixed many-body quantum states in a compact
matrix-product formulation, ultimately making the algorithm efficient and stable
both for transient \cite{Mixed,ZwolakVidal,LaeuchliDissipative} and steday state physics \cite{JianMariCarmen}.

However, in such an MPO description, the resulting truncated operators may not be positive;
in fact this property can not even be locally tested, because it turns out to be a computationally intractable problem \cite{UndecidableMPO}.
In Ref.~\cite{Mixed} this problem is circumvented by dropping the positivity assumption during the time-evolution,
which requires that the
approximation errors remain sufficiently small.
Alternatively, quantum jumps schemes make use of a stochastic unravelling of the master equation
\cite{DaleyZoller,DissipativeLightScattering,DaleyOpenReview} and
then employ pure-state techniques, at the expense of having to sample over many realisations.
Comparative performance of these two approaches has been recently investigated \cite{LauchCompare}.

Remarkably, the subset of matrix-product operators that are cast in a locally purified form \cite{Mixed,Gemma} shows promising features:
Such operators are positive by construction and
exhibit all the helpful features typical to tensor networks.
However, while variational
algorithms optimising within this class have been developed for
two-dimensional projected entangled pair states \cite{Pizorn2011time,Lubasch2014,Lubasch2014b}
a practical algorithm for
one-dimensional open systems has yet to be formulated.
Here we show that such a positivity-preserving algorithm can actually be engineered for Markovian dynamics:
This scheme has the computational efficiency of tensor network methods, allows to control all approximation errors in the operationally relevant trace norm, and  preserves positivity
by construction, thus ultimately merging the advantages of previous techniques while
solving known issues.
\begin{figure*}
\includegraphics[height=5.5cm,angle=270,clip=true, trim = 1cm 0cm 1cm 0.04cm]{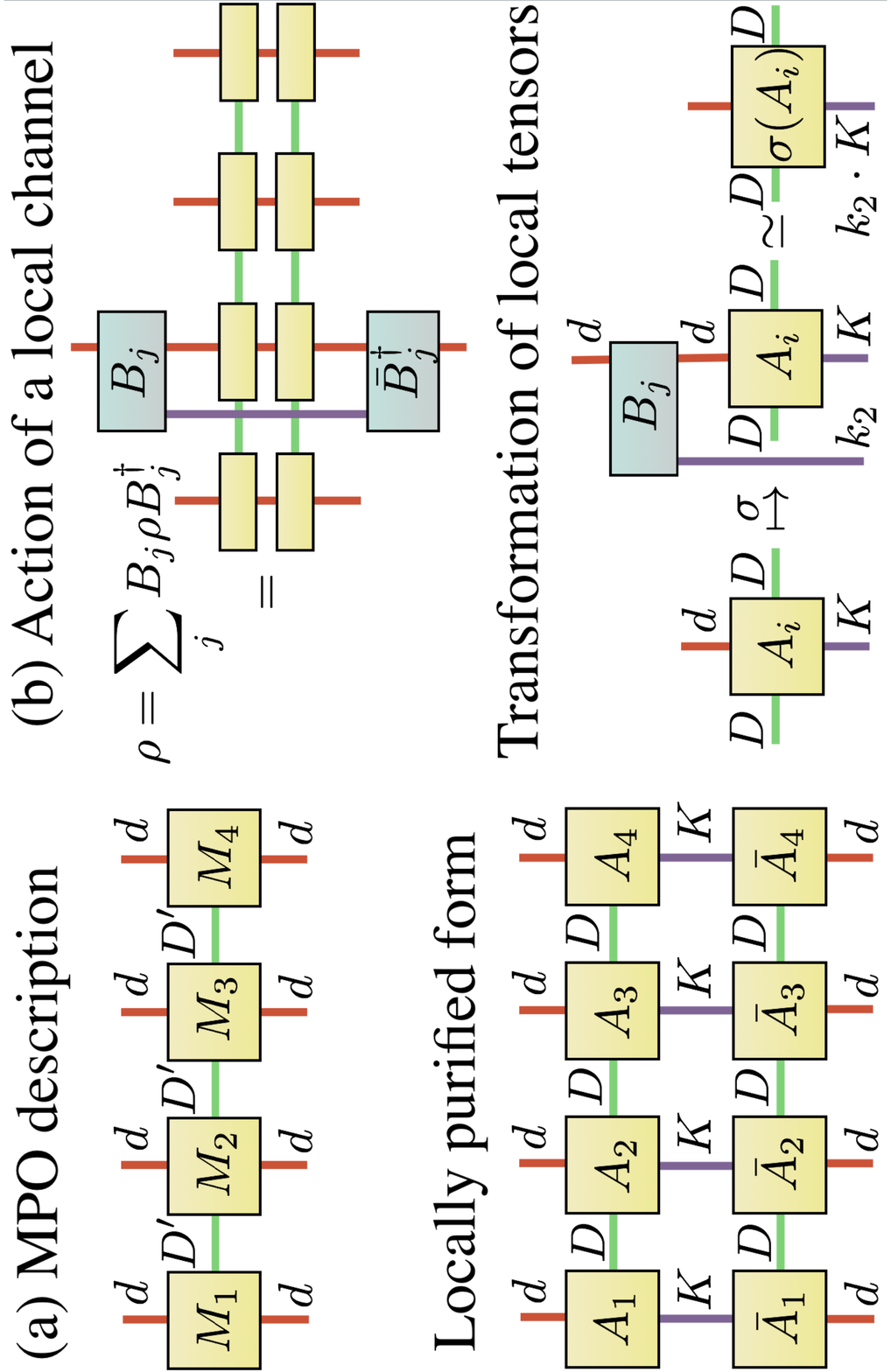}
$\quad$
\includegraphics[height=5.5cm,angle=270,clip=true, trim = 1cm 0cm 1cm 0.01cm]{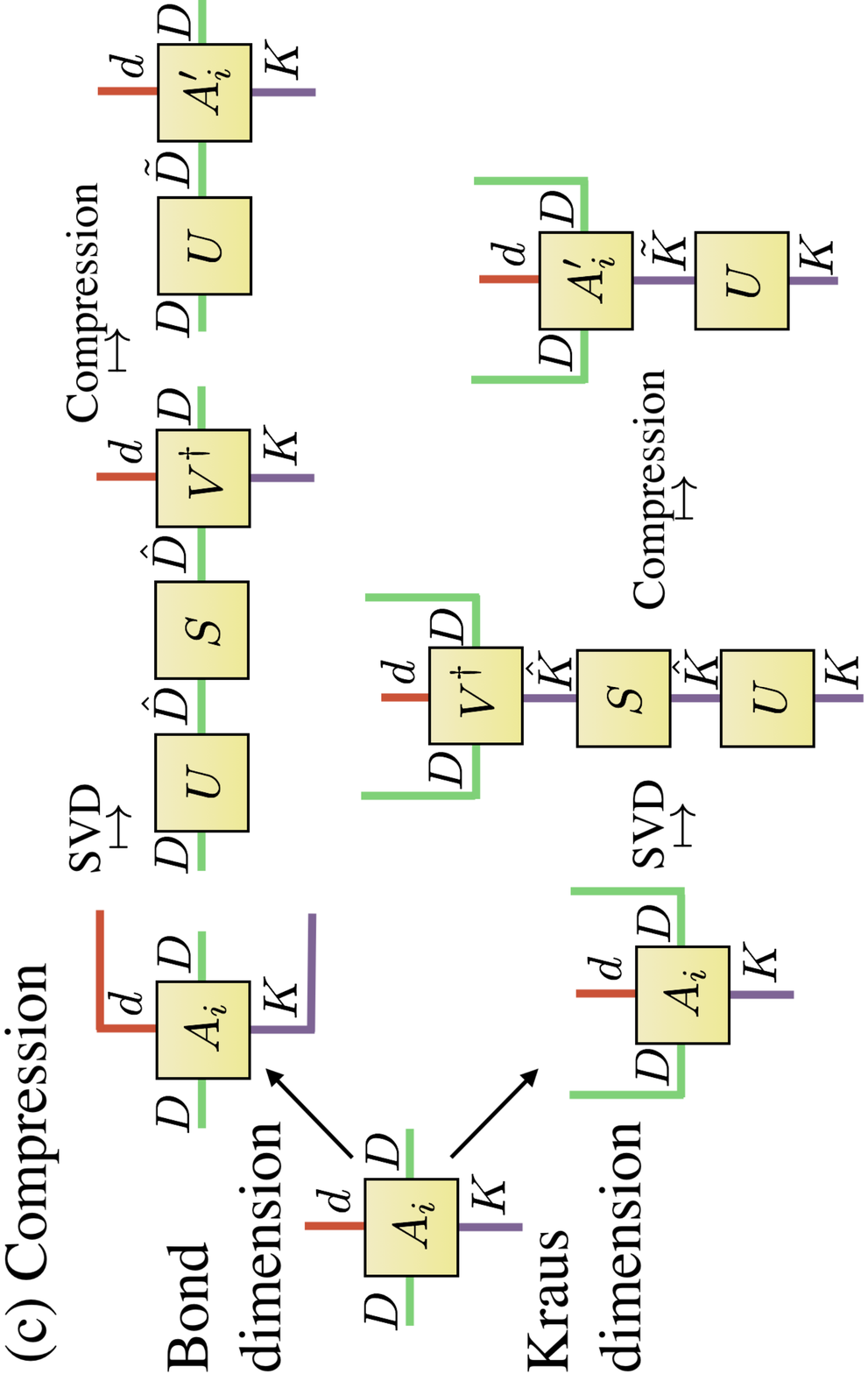}
$\quad$
\includegraphics[height=5.5cm,angle=270,clip=true, trim = 1cm 0cm 1cm 0cm]{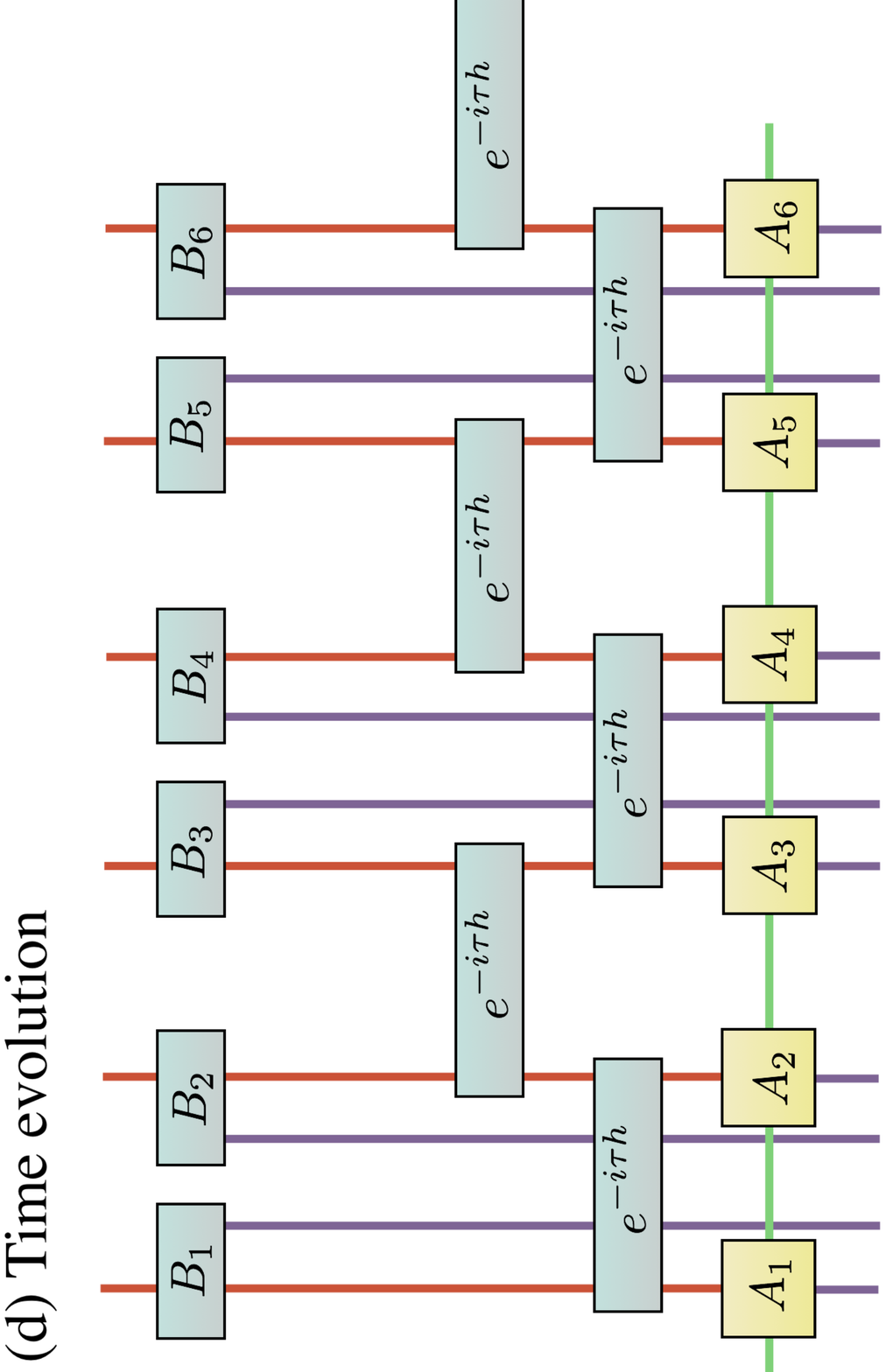}
  \caption{{Markov dynamics of a quantum spin chain on the level of local tensors. \emph{a)} shows the relationship between a density matrix $\rho$
  in MPO representation (top) and the locally purified tensor network (bottom) with tensors $A_l$, physical dimension $d$, bond dimension $D'$ and Kraus dimension $K$.
\emph{b)} Action of a local channel $T$ acting exclusively on lattice site $2$ on the level of the MPO and on the level of the locally purified tensors.
In the latter, the Kraus rank $k_2$ of the quantum channel $T$ is joined together with $K$.
\emph{c)} Compression schemes for the bond and Kraus dimension of a local tensor via singular value decompositions (SVD).
$d)$ Locally purified evolution of a time step $\e^{\tau\mathcal L}$ for a nearest neighbor Hamiltonian and on-site Lindblad operators.
We only show three of the five Trotter-Suzuki layers from Eq.~\eqref{eq:secorder}.}}\label{fig:sketchy}
\end{figure*}
%
%
%

{\it Algorithm.} Our goal is to simulate the transient evolution as well as to find steady states of spin chains
under local Markovian dynamics, i.e.~one-dimensional lattice systems (at finite system size, with open boundary conditions) governed by the Lindblad master equation
\begin{equation}\label{eq:Lbld}
  \frac{\intd \rho}{\intd t} = \mathcal L(\rho) = -\i[H,\rho] + \mathcal{D}\left(\rho\right)\, .
\end{equation}
Here, $H=\sum_{j} H_j$ is the local Hamiltonian of the system and the dissipative part of the Lindblad generator $\mathcal{L}$ takes the form
\begin{equation}
\mathcal{D}\left(\rho\right)=  \sum_{\alpha} \left(L_\alpha \rho L_\alpha^\dagger -\frac{1}{2}\{L_\alpha^\dagger L_\alpha, \rho \} \right)\;,
\end{equation}
where the Lindblad operators $L_\alpha$ model the coupling of the system with the environment. We focus on the typical scenario, where
the  elementary Hamiltonian terms $H_j$  as well as the Lindblad  operators $L_\alpha$ are two-local, meaning that they only couple spins on neighboring sites and denote them by $H^{[l,l+1]}$  or $L^{[l,l+1]}$, respectively.

We describe the variational mixed state of the system as a tensor network representing the density matrix $\rho$.
But instead of expressing $\rho$ directly as an MPO \cite{MPO_Representations, Mixed}
we keep it at every stage of our
algorithm in its locally purified form
$\rho = X X^{\dagger}$, where the \emph{purification operator} $X$ is decomposed as a variational tensor network
\begin{equation}\label{eq:X_in_terms_of_A}
[X]^{s_1 ,\ldots, s_N}_{r_1 ,\ldots, r_N} = \!\!\!\!\sum_{m_1 {,} \ldots {,} m_{N-1}}\!\!\!\! \locTel{[1]s_1,r_1}{m_1}
\locTel{[2]s_2,r_2}{m_1,m_2}\ldots\locTel{[N]s_N,r_N}{m_{N-1}} ,
\end{equation}
with $1\leq s_l \leq d, 1 \leq  r_l \leq K$, and $1 \leq m_l \leq D$.
That is, we represent $\rho$ by a \emph{locally purified tensor network}
made of rank-four tensors $\locT$ with physical dimension $d$, bond dimension $D$
and Kraus dimension $K$ (shown in Fig.~\ref{fig:sketchy}a). Our algorithm is an extension of the Time Evolving Block Decimation (TEBD) scheme \cite{GVidal},
acting on the level of the local tensor $\locT$ that also allows for dissipative channels,
and never requires to contract the two tensor network layers ($X$ and $X^{\dagger}$) together.
Similarly to TEBD,
it involves splitting the propagator $\e^{\tau\mathcal L}$ for a small time-step $\tau$ into several dissipative Trotter-Suzuki layers \cite{KliBarGog11} of mutually commuting operators. Let us consider the evolution from time $t$ to $t+\tau$ in row-wise vectorization
\begin{equation}
\Lket{\rho_{t+\tau}} = \Lket{\e^{\tau \mathcal L} \rho_t} =  \e^{\tau\left(-\i H \otimes \1 + \i\1 \otimes \bar{H} + \mathcal{D} \right)}\Lket{\rho_t}\; ,
\end{equation}
where $\Lket{M}$ denotes the  vector given as the row-wise concatenation of a matrix $M$.
As usual, for one spatial dimension (for possible generalizations to higher dimensions
see \app{sm:2d} ) we define the operators
$\Hodd$ and $\Heven$ by splitting the Hamiltonian $H = \sum_{l=1}^N H^{[l,l+1]}$ into two sums, one containing the even $H^{[2l,2l+1]}$ and one containing the odd interactions $H^{[2l-1,2l]}$.
Hence, both consist of mutually commuting terms.
If the Lindblad operators $L_\alpha$ act only on site (the case of two-site Lindblad operators is treated later on), we can
approximate $\e^{\tau \mathcal L}$ using a second order Trotter-Suzuki formula as (see the \app{supp:trac_cert} for a full error analysis)
\begin{equation} \label{eq:secorder}
\e^{\tau\mathcal{L}} = \e^{\tau \Aodd /2} \e^{\tau \Aeven /2} \e^{\tau \mathcal{D}} \e^{\tau \Aeven /2} \e^{\tau \Aodd /2} + O(\tau^3)\;,
\end{equation}
partially shown in Fig.~\ref{fig:sketchy}(d),
where $\Aeo = -\i \Heo \otimes \1 + \i \1 \otimes \bar{H}_{\nu}$ with $\nu = o,e$.
The layers $\Aeven$ and $\Aodd$ implement the coherent part of the evolution and are
identical to the usual TEBD layers.
Expressing $\rho_t$  as $\rho_t = X_t X_t^\dagger$ we find that
by setting $X_t' = \e^{-\i \tau \Aeo/2} X_{t}$ we recover exactly{$\rho_t' = \e^{\tau \Aeo /2} \rho_t$.
Hence, on the level of the local tensors $\locT$ we can just adapt the usual TEBD algorithm for nearest neighbor Hamiltonians,
to perform the coherent part of the dynamics.

The dissipative layer requires a more careful treatment
 and we exploit the fact that since the  operators $L^{[l]}$ act only on a single
site, we find $\e^{\tau \mathcal{D}} = \bigotimes_l \e^{\tau \mathcal{D}^{[l]}}$, with
\begin{equation}
\mathcal{D}^{[l]} = \sum_{j} \bigl( L^{[l]}_j \otimes \bar L^{[l]}_j - \tfrac{1}{2}
 (L^{[l]\dagger}_j L^{[l]}_j \otimes \1 + \1 \otimes L^{[l]T}_j \bar L^{[l]}_j)\bigr)\,,
\end{equation}
 where the sum runs over all Lindblad operators $L^{[l]}_j$ acting on site $l$.
 Since $\e^{\tau \mathcal{D}_l}$ is completely positive,  we can find  a set of Kraus-operators $\{B_{l,q}\}$ satisfying $\e^{\tau \mathcal{D}_l} =
\sum_{q=1}^{k} B_{l,q} \otimes \bar{B}_{l,q}$.  The action of $\e^{\tau\mathcal{D}_l}$ on the level of the local tensors is now given by a contraction of $B_{l,q}$ into $\locT_t$,
while joining the variational Kraus dimension $K$ with the Kraus rank $k\leq d^2$ of the quantum
channel $\e^{\tau \mathcal{D}_l}$, as shown in Fig.~\ref{fig:sketchy}(b).
The application of each Trotter-Suzuki  layer increases only the dimension of a single  index
of the local tensors $\locT$: The bond dimension $D$ is increased by the coherent layers, the
Kraus dimension $K$ by the dissipative layers.
This allows for immediate compression of the enlarged dimension similar to the standard DMRG-setting.
In all compression steps the Frobenius norm error introduced on the purification operators can be kept track of. This translates into a trace-norm error for the state itself. By taking also the error from the Trotter-Suzuki approximation into account, we provide an explicit bound for the trace-norm error, see Theorem~$1$ in \app{supp:trac_cert}.

The algorithm yields an overall computational cost scaling as $O(d^5D^3K) + O(d^5D^2K^2)$,
by executing a clever contraction of the coherent terms.
Moreover, the locally purified tensor network takes advantage of the gauge freedom,
e.g., by reducing costs for local measurements from $O(N)$ to $O(1)$, with $N$ being the system size.

{\it Numerical results.} In order to verify the applicability of our method we apply it to three prototypical benchmark situations. The first one comprises a few-body scenario, consisting of two qubits coupled via cavities with additional excitation-losses \cite{Schmidt2010}.
As genuine many body example we study the steady state of an XXZ spin $1/2$ chain with edge dissipation channels,
which allows for comparison with analytical solutions derived in Ref.~\cite{ProsenDissipativeStationaryXXZ}.
Finally we show the validity of the two-site Lindblad-operator approach in the
case of the  Kitaev wire \cite{DissipationTopologyZoller}.


\begin{figure}
 \begin{center}
 \begin{overpic}[width = .75\columnwidth, unit=1pt]{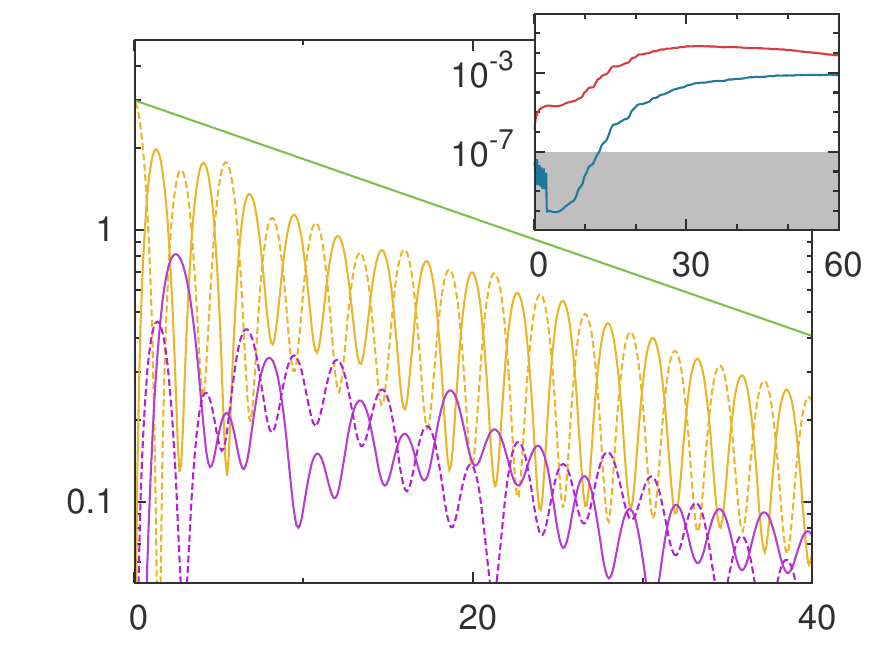}
  \put(2, 65){\large $n_j(t)$}
  \put(73, 2){\large $t$}
  \put(87, 43){\large $t$}
  \put(56, 72){\large $\Infid$}
 \end{overpic}
 \end{center}
\caption{ \label{fig:SCCS}
Main: Excitation populations of the four sites (see main text) in the coupled spin-cavity model
(yellow: cavity, violet: spin, dashed: right, straight: left),
here for $\gamma = 0.05$,
$\alpha_1 = \alpha_2 = 0.48,$ $\alpha_{CC} = -1.0,$ $\omega_{C}  = \omega_S = 1.0$,
as well as their sum $\mathcal{N}$ (green line). The latter is nicely fitted by an exponential, with
decay rate $\gamma_{\text{fit}} = 0.04997 \pm 8\cdot 10^{-5}$.
Inset: Comparison of the locally purified evolution,
here for bond dimension $D = 40$ and Kraus dimension $K = 40$,
with the exact Liouville evolution: Infidelity $\Infid$ (blue line)
and relative Hilbert-Schmidt distance (red line). Infidelities are estimated
to be numerically reliable above $10^{-7}$ (non-shaded area).
}
\end{figure}


\begin{figure}
 \begin{center}
  \begin{overpic}[width = .8\columnwidth, unit=1pt]{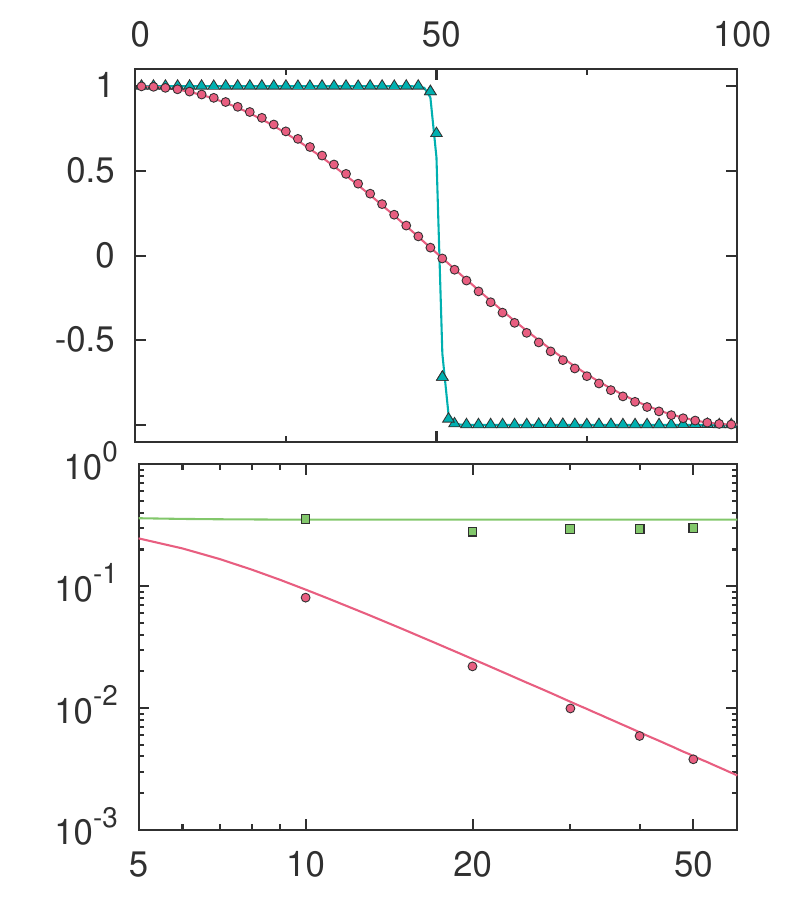}
  \put(0, 91){\large $\langle \sigma^z_j \rangle$}
  \put(64, 95){\large $j$}
  \put(64, 2){\large $N$}
  \put(4, 40){\large $I$}
 \end{overpic}
 \end{center}
\caption{ \label{fig:XXZ}
Comparison of simulated steady state (points) with analytical results (lines)
from Ref.\ \cite{ProsenDissipativeStationaryXXZ} for the XXZ model
with edge driving of $\gamma=1$ in several parameter regimes.
Respectively: green squares $\Delta = 0.5$, red dots $\Delta = 1.0$, cyan triangles $\Delta = 1.5$.
Top:
Local magnetisation in the $z$-direction $\langle \sigma^z_j \rangle$ as a function
of the site $j$, for a chain of length $N = 100$.
Bottom:
Time averaged steady state spin current $I_j = 2\, \imag \langle \sigma^{+}_{j} \sigma^{-}_{j+1}\rangle$
at the chain center $j=N/2$,
as a function of the total chain length $N$.
}
\end{figure}

In the first model, two interacting optical cavities ($C_1$ and $C_2$) are each coupled
to a private qubit ($S_1$ and $S_2$). Ordering the sites as $(S_1-C_1)-(C_2-S_2)$ results in a
nearest neighbor model to which we can apply our algorithm.
The coherent part of the dynamics is captured by the Jaynes-Cummings Hamiltonian, which describes each spin-cavity
interaction, plus a photon tunnelling between cavities.
In terms of the spin-operators $\sigma^\pm_l = (\sigma^x_l\pm i\sigma_l^y)/2$ and
the creation $c_l^\dagger$ and annihilation operators $c_l$ of the
cavity photons. The Hamiltonian is given by
\begin{equation}
H = \sum_{l = 1,2} (\alpha_{l} (\sigma^+_l c_l + \sigma^-_l c^{\dagger}_l) + \omega_{C} n_{l} + \omega_{S} \sigma^{z}_{l})
+ \alpha_{CC} ( c^{\dagger}_1 c_2 + c^{\dagger}_2 c_1 )\,.
\end{equation}
The dissipation models a homogeneous probability of excitation losses and
given by single-site Lindblad operators: $L_{S_l} = \sqrt{\gamma} \,\sigma^{-}_l$
for the spins and
$L_{C_l} = \sqrt{\gamma} \,c_l$ for the cavities.
We start the evolution in a pure product state, where only the right
cavity is nonempty and filled exactly with $N(0) = 3$ photons.
The symmetries of the model imply an easy relation between the total excitation number
$\mathcal{N}(t)$ and the coupling strength $\gamma$: $\mathcal{N}(t) = \mathcal{N}(0) \,\mathrm{e}^{- \gamma  t}$, a behaviour which is nicely captured by our simulations with high precision
(under 0.2\% relative deviation).
Fig.~\ref{fig:SCCS} shows the occupation on each site as well as $\mathcal{N}(t)$,
which correctly reproduces the expected exponential decay.
The inset shows
the infidelity
\begin{equation}
\Infid (\rho_1, \rho_2) = 1 - \Tr\left[ ({ {\rho_2}^{1/2} \rho_1 {\rho_2}^{1/2}} )^{1/2} \right]
\end{equation}
 of the locally purified evolution $\rho_1$
with respect to the exact evolution $\rho_2$ carried out in Liouville space. As expected,  close to the steady state the deviation of the
locally purified dynamics from the exact evolution
converges to a finite value (depending both on $\tau$ and on maximal bond  $D$ and Kraus dimension $K$).

As second benchmark, we consider the evolution of a spin-chain under the XXZ-Hamiltonian
\begin{equation}
H = \sum_l (\sigma^x_l \sigma^x_{l+1} +  \sigma^y_l \sigma^y_{l+1} + \Delta \sigma^z_l \sigma^z_{l+1})\,.
\end{equation}
Via Jordan-Wigner transformation this system is mapped into a spin-less fermion Hubbard model with a density-density
nearest neighbor interaction.
Therefore, with the addition of two reservoirs
embodied by Markov channels at the edges, it models
fermionic DC transport in a quantum wire.
We introduce Lindblad operators $L_{\leftarrow} = \sqrt{2 \gamma}\, \sigma^{+}_1$ at the leftmost site (source)
and $L_{\rightarrow} = \sqrt{2 \gamma}\, \sigma^{-}_N$ at the rightmost {site} (drain).
We search the steady state for different parameter {regimes $(\Delta, \gamma)$}. In order to compare our simulations with analytic results from Ref.~\cite{ProsenDissipativeStationaryXXZ}, we consider two observables, the local z-axis magnetisation $\sigma_l^z$ and the spin-current operator $I_l = i\left(\sigma^+_l\sigma^-_{l+1} - \sigma^-_l\sigma^+_{l+1}\right)$.
The steady state regime is achieved by evolving the system until the considered observables become stationary.
Fig.~\ref{fig:XXZ} shows typical results: the local magnetisation of a  chain of $N=100$ spins
is plotted in the top panel, while the current as a function of the chain length is plotted on the bottom frame.
A remarkable quantitative match to Ref.~\cite{ProsenDissipativeStationaryXXZ} emerges even for small
$D$ and $K$ ($\sim 60$).


\begin{figure}
 \begin{center}
 \begin{overpic}[width = .75\columnwidth, unit=1pt]{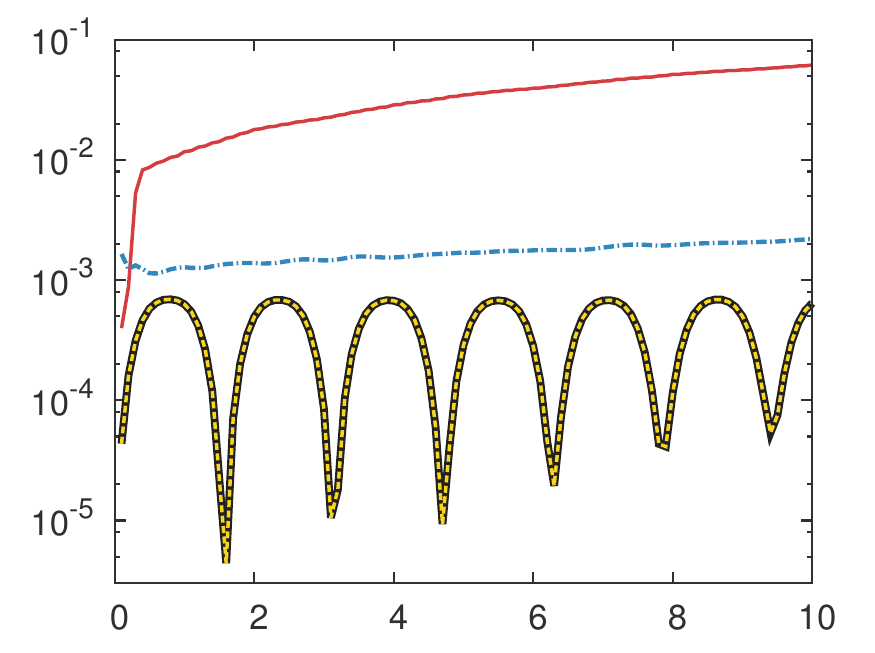}
  \put(0, 62){\large $\Infid$}
  \put(85, 2){\large $t$}
 \end{overpic}
 \end{center}
\caption{ \label{fig:Kita}
Infidelities $\Infid$ of the two-site Lindblad evolution strategies for the Kitaev wire model,
as a function of time $t$, for $N = 6$ sites.
Strategies \emph{a)}, \emph{b)} and \emph{c)} are shown (thick black line overlapping with the yellow dashed line, red line, and yellow dashed line respectively).
The cyan dot-dashed line shows the error estimator calculated via truncated singular value data, for strategy \emph{c)}.
It is obtained via Eq.~\eqref{eq:estimator} of the \app{supp:trac_cert}.
Here we used $K = D = 30$.
}
\end{figure}

Finally, we apply our method in a generalized setting, where the Lindblad operators are two-local instead of on-site
(see the \app{app:nn_ch} for details).
We employ a two-terms Trotter-Suzuki  approximation by odd
$\mathcal{L}_o = \Aodd + \Dodd$ and even $\mathcal{L}_e = \Aeven + \Deven$ terms.
After computing the Kraus-decomposition for the corresponding nearest-neighbor channels
\begin{equation}
\e^{\tau \mathcal{L}_{2l,2l+1}} = \sum_q^k B^{[2l,2l+1]}_q\otimes\bar{B}^{[2l,2l+1]}_q
\end{equation}
one can choose how to implement the action of $B^{[2l,2l+1]}_q$ onto $A^{[2l]}$ and $A^{[2l+1]}$.
In particular, there are different
possibilities for the distribution of the Kraus rank $k$ of the channel between the Kraus dimensions $K_{2l}$ and $K_{2l+1}$ of the two sites.
Moreover, when such dimension $k$ is distributed nontrivially ($k_1 > 1$ to the left site, and $k_2 > 1$ to the right site,
where $k_1 k_2 = k$) there is an additional freedom, represented by a unitary transformation $U$ in the $k$-dimensional auxiliary
space, that will influence the precision of the whole algorithm. This gauge transformation $U$
is discussed in detail in \app{sec:method}, alongside a numerical technique we adopt to optimize it.
For appropriate comparison, we consider three strategies: \emph{a)} Kraus rank all to one side ({\it e.g.}~$k_2 = 1$),
\emph{b)} Kraus rank distributed as evenly as possible ($k_1 \simeq k_2 \simeq \sqrt{k}$),
with random $U$ (unoptimized),
\emph{c)} analogously to \emph{b)} but with optimized $U$.

As a testbed we consider a Kitaev wire with two-site Lindblad operators as presented in
Ref.\ \cite{DissipationTopologyZoller} and compare it again with the exact numerical evolution to study fidelities.
The system is a spinless fermion chain, with Hamiltonian
\begin{equation}
H = \sum_{j} ((-J \,a^{\dagger}_{j} a_{j+1} - \Delta \,a_{j} a_{j+1} + \mbox{h.c.}) - \mu \,a^{\dagger}_{j} a_{j})\,,
\end{equation}
and a single Lindblad channel for every pair of neighboring sites
$L_{j,j+1} = \sqrt{\gamma} (a^{\dagger}_{j} + a^{\dagger}_{j+1})(a_{j} - a_{j+1}) / 4$.
The parameters are chosen in order to target the non-trivial topological phase \cite{DissipationTopologyZoller}: $J=\Delta=1.0$,
$\mu = 0.0$, and $\gamma = 10^{-2}$.
Fig.\ \ref{fig:Kita} shows that we can capture accurately the real-time evolution starting from
an entangled mixed random state, by direct comparison of our scheme with the exact Liouville evolution,
for a chain of $6$ sites.
It also suggests that strategies \emph{a)} and
\emph{c)} yield, surprisingly, equivalent precision, and are preferable choices than strategy \emph{b)}.
Moreover, we plot the error estimator obtained from the discarded singular values during state compression (discarded weight, see Definition \ref{def:dw}  \app{supp:trac_cert}) for strategy \emph{c)}, and show that it is an effective
upper bound to the actual infidelity error.


{\it Perspectives.} In this work, we have introduced a versatile algorithm for simulating open quantum many-body systems. All errors made by the algorithm are bounded in trace-norm.
The ideas presented here
overcome a number of previous limitations and allow us to probe both transient behaviour and the full dynamics to stationary states.
We have discussed three important
benchmark cases, and a number of perspectives open up here:
Firstly, the framework can be used to analyse weakly interacting open quantum systems, perturbing frequently studied free fermionic
models to study topology generated by engineered dissipation \cite{DissipationTopologyZoller,TopologyDissipationLong}. Clearly, notions of algebraic and exponential dissipation can readily be accessed \cite{CaiBarthel}, as well as glass-like dynamics \cite{GlassyKollath} and kinematic inhibitance, or the interplay between localization by dissipation and disorder.
Furthermore, the method finds immediate application in dissipative quantum engineering of entangled many-body states \cite{Dissipengineer},
for instance by merging with optimal control techniques \cite{DoriaControl}.
It will also be an invaluable
tool in exploring shortcuts to adiabaticity \cite{Shortcuts} in open-system quantum many-body settings.
Another intriguing enterprise is to further explore questions of stability of stationary states under local Liouvillian perturbations, in particular without the assumption of
a finite log-Sobolev constant or rapid mixing \cite{Mixing,MadridRobustness}. It would be also exciting to explore formulations
of our method in a time-dependent variational principle framework \cite{Lagrangian,Flow}.

{\it Acknowledgements.} This work has been supported by
the EU (SIQS, RAQUEL, COST, AQuS),
the ERC (TAQ),
and the DFG (SFB/TRR21).
We thank T.\ Prosen for discussions. We acknowledge the BWgrid for computational resources.

%

\section*{Appendix}
In this appendix we provide some additional details on our numerical method and prove the trace norm certificate for the algorithm. In section \ref{supp:1} we review some definitions well known facts and relations between different matrix norms. In section \ref{supp:trotter_bounds} we we extend the Trotter-Suzuki approximation for  Liouvillian time-evolution with time-constant generators in terms of the diamond norm from \cite{KliBarGog11} to second order before we provide additional details on our numerical method in section \ref{sec:method}. The full error analysis of our algorithm and the trace norm certificate are presented in section \ref{supp:trac_cert}.

\subsection{Notation and simple facts}\label{supp:1}
In order to do make our error bounds precise we need to find meaningful error measures.
The optimal distinguishability of observables by single expectation values is the operator norm $\norm{\argdot}_\infty$, which coincides with the largest singular value. The optimal distinguishability of quantum states by single expectation values is given in terms of the trace-norm defined by
\begin{equation}
 \norm{R}_1 \coloneqq \Tr( \sqrt{R^\dagger R}) \, .
\end{equation}
Similarly, we define the Frobenius norm by
\begin{equation}
 \norm{R}_2 \coloneqq \sqrt{\Scp{R}{R}} \, ,
\end{equation}
where $\Scp{Q}{R} \coloneqq \Tr(Q^\dagger R)$ denotes the Hilbert-Schmidt inner product. Note that
\begin{equation}\label{eq:2norm-1norm}
 \norm{R}_2\leq \norm{R}_1 \leq \sqrt{\rank(R)} \norm{R}_2
\end{equation}
for any operator $R$. Importantly, the Frobenius can be defined in a similar way for all types of tensors and coincides with the usual vector $2$-norm of the vectorized tensors.

A notion of closeness of two quantum state $\rho$ and $\sigma$ is given by the \emph{fidelity}
\begin{equation}
 \fid(\rho, \sigma)
 \coloneqq
 \Tr\Bigl( \sqrt{ \sqrt{\sigma} \rho \sqrt{\sigma}}\, \Bigr) \, .
\end{equation}
Fidelity and trace norm are related as
\begin{equation} \label{eq:fidelity_trace_norm}
  1-\fid(\rho,\sigma)\leq \kw 2 \norm{\rho-\sigma}_1\leq \left({1-\fid^2(\rho,\sigma)}\right)^{1/2}\;.
\end{equation}
The operational distinguishability of quantum channels is given by the diamond norm, which is defined by
\begin{align}
\norm{T}_\diamond \coloneqq \sup_{n\geq 1} \norm{T\otimes \id_n}_{1\to 1}
\end{align}
for any linear map $T$ on operators (e.g., a difference of quantum channels), where
\begin{equation}
\norm{T}_{1\to 1} \coloneqq \sup_{\norm{R}_1=1} \norm{T(R)}_1
\end{equation}
defines the ($1\to 1$)-norm and $\id_n$ denotes the identity channel acting on an $n$-dimensional quantum system. In fact, the supremum over $n$ is attained for $\id_n$ being the identity acting on a copy of the system $T$ acts on. The ($1\to 1$)-norm measures the distinguishability of quantum channels given by applying the channels to a states and measuring an observable and optimizing over all states and observables. The diamond norm measures the distinguishability of quantum channels in a similar way but where arbitrary ancilla systems are allowed. Fortunately, the diamond norm can be calculated efficiently (in the Hilbert space dimension) \cite{Wat12}. Moreover, the diamond of a arbitrary Liouvillian \eqref{def:dw} is bounded as
\begin{align}
 \dnorm{\L} \leq 2 \norm{H}_\infty + 2 \sum_{\alpha} \norm{L_\alpha}^2_\infty \, .
\end{align}

In the end, we would like to have a bound on the fidelity or trace-norm error of the state we simulate with our algorithm. The algorithm works in a purification vector space, where our compression steps introduce $2$-norm errors. Fortunately, they translate into fidelity or trace-norm errors for the actual states. For the case without any locality structure, this is a well-known fact captured by the following lemma. For sake of completeness, we will also provide a proof.

\begin{lemma}[Trace-norm and fidelity bound]\label{lem:2normdist}
	Let $\rho = X X^\dagger$ and $\sigma=Y Y^\dagger$ be density operators. Then
	\begin{align}
	\norm{\rho - \sigma}_1 &\leq \sqrt{2} \norm{X-Y}_2 \label{eq:Fnorm2trnorm}
	\intertext{and}
	\fid(\rho, \sigma) &\geq \frac{1}{2}\left(2-\norm{X-Y}_2^2\right) . \label{eq:Fnorm2fid}
	\end{align}
\end{lemma}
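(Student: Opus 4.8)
The plan is to prove the fidelity estimate \eqref{eq:Fnorm2fid} first, since it carries the real content, and then read off the trace-norm estimate \eqref{eq:Fnorm2trnorm} by feeding it into the Fuchs--van de Graaf inequality \eqref{eq:fidelity_trace_norm}. The starting observation is that $X$ and $Y$ are purifications of $\rho$ and $\sigma$ and are automatically normalized: since $\rho=XX^\dagger$ is a state, $\norm{X}_2^2=\Tr(X^\dagger X)=\Tr(XX^\dagger)=\Tr(\rho)=1$, and likewise $\norm{Y}_2=1$. Enlarging the purifying (Kraus) index if necessary so that $X$ and $Y$ map into a common space, $\Tr(X^\dagger Y)$ is well defined and expanding the Frobenius norm gives
\[
\norm{X-Y}_2^2 = \norm{X}_2^2 + \norm{Y}_2^2 - 2\,\Re\Tr(X^\dagger Y) = 2 - 2\,\Re\Tr(X^\dagger Y),
\]
so that \eqref{eq:Fnorm2fid} is equivalent to the overlap bound $\fid(\rho,\sigma)\geq\Re\Tr(X^\dagger Y)$.

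To prove that overlap bound I would use $\fid(\rho,\sigma)=\norm{\sqrt\rho\,\sqrt\sigma}_1$, which is immediate from the definition of $\fid$ since $(\sqrt\rho\,\sqrt\sigma)^\dagger(\sqrt\rho\,\sqrt\sigma)=\sqrt\sigma\,\rho\,\sqrt\sigma$. Writing left polar decompositions $X=\sqrt\rho\,V$ and $Y=\sqrt\sigma\,W$ with partial isometries $V,W$ (possible because $\sqrt{XX^\dagger}=\sqrt\rho$, so $\norm{V}_\infty,\norm{W}_\infty\leq1$), cyclicity of the trace gives $\Tr(X^\dagger Y)=\Tr(\sqrt\rho\,\sqrt\sigma\,WV^\dagger)$; since $\norm{WV^\dagger}_\infty\leq\norm{W}_\infty\norm{V}_\infty\leq1$, Hölder's inequality for Schatten norms yields $\Abs{\Tr(X^\dagger Y)}\leq\norm{\sqrt\rho\,\sqrt\sigma}_1\,\norm{WV^\dagger}_\infty\leq\fid(\rho,\sigma)$. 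Hence $\fid(\rho,\sigma)\geq\Re\Tr(X^\dagger Y)$, which combined with the identity above is precisely \eqref{eq:Fnorm2fid}. This step is just the easy half of Uhlmann's theorem applied to the purifications $X,Y$, and one could invoke Uhlmann directly instead.

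For \eqref{eq:Fnorm2trnorm} I would then combine \eqref{eq:Fnorm2fid} with the right-hand Fuchs--van de Graaf bound in \eqref{eq:fidelity_trace_norm}: with $\epsilon\coloneqq\tfrac12\norm{X-Y}_2^2$, \eqref{eq:Fnorm2fid} reads $1-\fid(\rho,\sigma)\leq\epsilon$, and then $\tfrac12\norm{\rho-\sigma}_1\leq(1-\fid^2)^{1/2}=\bigl((1-\fid)(1+\fid)\bigr)^{1/2}$, which is controlled using $1-\fid\leq\epsilon$ together with $1+\fid\leq2$; this delivers the trace-norm estimate \eqref{eq:Fnorm2trnorm}. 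Alternatively, bypassing \eqref{eq:fidelity_trace_norm} altogether, one can argue directly from the telescoping identity $\rho-\sigma=X(X-Y)^\dagger+(X-Y)Y^\dagger$, so that the triangle inequality together with $\norm{AB^\dagger}_1\leq\norm{A}_2\norm{B}_2$ and $\norm{X}_2=\norm{Y}_2=1$ bounds $\norm{\rho-\sigma}_1$ by a multiple of $\norm{X-Y}_2$.

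There is no deep obstacle here; the only places needing care are (i) setting up the purifications cleanly — a common purifying space for $X$ and $Y$, and the normalization $\norm{X}_2=\norm{Y}_2=1$ — and correctly relating the purification overlap $\Re\Tr(X^\dagger Y)$ to $\fid(\rho,\sigma)$ via the polar-decomposition/Uhlmann argument, and (ii) keeping track of the numerical constant when passing through \eqref{eq:fidelity_trace_norm}; everything after that is routine Schatten-norm estimation.
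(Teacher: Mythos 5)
Your proof of the fidelity bound \eqref{eq:Fnorm2fid} is correct and is essentially the paper's own argument: the paper also reduces to $\fid(\rho,\sigma)\geq\Abs{\Tr(X^\dagger Y)}\geq\Re\Tr(X^\dagger Y)$ and then expands $\norm{X-Y}_2^2=2-2\Re\Tr(X^\dagger Y)$ using $\norm{X}_2=\norm{Y}_2=1$; you merely make the first inequality explicit via polar decompositions and H\"older, i.e.\ the easy half of Uhlmann's theorem, which is fine.

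The trace-norm bound is where there is a genuine problem, and it is a problem with the constant. Tracking the factors in your Fuchs--van de Graaf route: with $\epsilon=\tfrac12\norm{X-Y}_2^2$ you obtain $\norm{\rho-\sigma}_1\leq 2\left((1-\fid)(1+\fid)\right)^{1/2}\leq2\sqrt{2\epsilon}=2\norm{X-Y}_2$, and your telescoping identity $\rho-\sigma=X(X-Y)^\dagger+(X-Y)Y^\dagger$ likewise gives $\norm{\rho-\sigma}_1\leq\norm{X}_2\norm{X-Y}_2+\norm{X-Y}_2\norm{Y}_2=2\norm{X-Y}_2$. Neither delivers the stated prefactor $\sqrt2$, and this is not a defect of your method: the constant $\sqrt2$ is unattainable. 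Take $\rho=\ketbra{0}{0}$ and $\sigma=\ketbra{\psi}{\psi}$ with $\ket{\psi}=\cos\theta\ket{0}+\sin\theta\ket{1}$, and the purifications $X=\ketbra{0}{e}$, $Y=\ketbra{\psi}{e}$; then $\norm{\rho-\sigma}_1=2\sin\theta$ while $\norm{X-Y}_2=\sqrt{2-2\cos\theta}=2\sin(\theta/2)$, so the ratio equals $2\cos(\theta/2)$, which exceeds $\sqrt2$ for every $0<\theta<\pi/2$ and tends to $2$ as $\theta\to0$. (For comparison, the paper's own proof proceeds differently: it bounds $\norm{\ketbra{X}{X}-\ketbra{Y}{Y}}_1\leq\sqrt2\,\norm{\ketbra{X}{X}-\ketbra{Y}{Y}}_2$ via the rank-$2$ instance of \eqref{eq:2norm-1norm} and then invokes monotonicity of the trace distance under partial trace; but its final step requires $2-2\Abs{\braket{X}{Y}}^2\leq\norm{X-Y}_2^2$, i.e.\ $\Re\braket{X}{Y}\leq\Abs{\braket{X}{Y}}^2$, which already fails for a real overlap of $1/2$.) The correct and sharp statement is $\norm{\rho-\sigma}_1\leq2\norm{X-Y}_2$, which both of your arguments do establish; downstream this only changes numerical prefactors in Theorem~\ref{thm:certificate}.
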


\begin{proof}[Proof of bound~\eqref{eq:Fnorm2trnorm}]
	We view $X$ as a purification of $\rho$ and denote by $\ketbra XX$ the corresponding density matrix, and similarly for the purification $Y$ of $\sigma$.
	In fact, we show a stronger statement:
	\begin{equation}\label{eq:stronger}
	\norm{\ketbra XX - \ketbra YY }_1 \leq \sqrt{2} \norm{X-Y}_2 \, .
	\end{equation}
	$\sigma$ and $\rho$ are partial traces of $\ketbra XX$ and $\ketbra YY$, respectively. The proposition then follows from the fact that the trace distance is non-increasing under partial traces.
	
	To prove Eq.~\eqref{eq:stronger} we use the inequality \eqref{eq:2norm-1norm}, the fact that the $2$-norm comes from the Hilbert-Schmidt inner product, the normalisations of $\rho$ and $\sigma$, and that this normalization implies $\norm{X}_2 = \norm{Y}_2 = 1$ to write
	\begin{align}
	\norm{\ketbra XX - \ketbra YY}_1^2
	&\leq
	2\norm{\ketbra XX - \ketbra YY}_2^2
	\\
	&=
	2\left(2-2 \Re \Tr\bigl(\ketbra X X^\dagger \, \ketbra Y Y\bigr)\right)\nonumber
	\\
	&=
	2\left(2-2|\braket X Y |^2\right)\nonumber
	\\
	&\leq 2 \norm{X-Y}_2^2 \, .	\nonumber
	\end{align}
	This finishes the proof.
\end{proof}

\begin{proof}[Proof of bound~\eqref{eq:Fnorm2fid}]
Using the decomposition $\rho=X^\dagger X$ and $\sigma = Y^\dagger Y$ and the definition $\Abs{A} = \sqrt{A^\dagger A}$ we obtain
\begin{align}\label{eq:fid_I}
\fid(\rho,\sigma) &= \Tr\Abs{\left(\Abs{X}\cdot\Abs{Y^\dagger}\right)}\geq \Abs{\Tr\left(X^\dagger Y\right)}\\
&\geq \frac{1}{2}\left( \Scp{X}{Y}+ \Scp{Y}{X}\right) ,\nonumber
\end{align}
where $\Scp{X}{Y}$ denotes the Hilbert-Schmidt-scalar product of $X$ and $Y$. Note that  $\Tr\rho=1$ implies $\norm{X}_2^2=1$.
Hence, inserting into \eqref{eq:fid_I} the canonical decomposition of the $2$-norm into overlaps, given by
\begin{align}\label{eq:2nscp}
  \norm{X-Y}^2_2 = \norm{X}^2 + \norm{Y}^2 - \Scp{X}{Y} - \Scp{Y}{X}
\end{align}
completes the proof.
\end{proof}

\subsection{Explicit error bounds for dissipative Trotter-Suzuki approximations}\label{supp:trotter_bounds}
In this section we derive and discuss error bounds on Trotter-Suzuki approximations for Markovian dynamics. We build on the derivation of a first order Trotter-Suzuki approximation from Ref.~\cite{KliBarGog11} to prove an error bound also for the second order. The results in Ref.~\cite{KliBarGog11} allow for explicitly time dependent Liouvillians. But here, we restrict our analysis to the time-constant case.

Often, error bounds for Trotter-Suzuki approximations only give the scaling in the time step but not in the system size \cite{Suz91}. A subtle point in the derivation of such bounds is that even though quantum channels generated by Liouvillians can be inverted as linear maps but their inverses are in general no quantum channels and are not normalized in diamond norm.

The following bound is an important step in the proof of the main theorem in Ref.~\cite{KliBarGog11} and will also help us to establish the second order Trotter-Suzuki approximation.

\begin{lemma}[Part of Theorem~5 in SM of Ref.~\cite{KliBarGog11}]
	\label{lem:commutator_T_K}
	Let $\K$ and $\L$ be Liouvillians and $r\geq 0$. Then
	\begin{align}
	\dnorm{[\e^{r\L}, \K]} \leq r\, \e^{r\dnorm{\L}} \dnorm{[\K,\L]} \, .
	\end{align}
\end{lemma}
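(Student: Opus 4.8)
The plan is to prove the commutator bound $\dnorm{[\e^{r\L}, \K]} \leq r\, \e^{r\dnorm{\L}} \dnorm{[\K,\L]}$ by introducing an interpolation parameter and writing the commutator as an integral of its derivative. Concretely, I would define the operator-valued function $f(s) = \e^{s\L} \K \e^{-s\L}$ for $s \in [0,r]$ and note that $f(0) = \K$, so that $[\e^{r\L},\K] = \e^{r\L}(\K - \e^{-r\L}\K\e^{r\L})\,(\text{something})$ — more cleanly, I would write $[\e^{r\L},\K] = \e^{r\L}\K - \K\e^{r\L}$ and express this as $\e^{r\L}\bigl(\K - \e^{-r\L}\K\e^{r\L}\bigr)$. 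Setting $g(s) = \e^{-s\L}\K\e^{s\L}$ we have $g(0) = \K$ and $\frac{\intd}{\intd s} g(s) = \e^{-s\L}[\K,\L]\e^{s\L}$, hence $\K - \e^{-r\L}\K\e^{r\L} = -\int_0^r \e^{-s\L}[\K,\L]\e^{s\L}\,\intd s$.

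From here the strategy is purely to estimate in diamond norm. The diamond norm is submultiplicative under composition of superoperators, and since $\e^{s\L}$ is a quantum channel for $s \geq 0$ it has $\dnorm{\e^{s\L}} = 1$; however $\e^{-s\L}$ is not a channel, so one must bound $\dnorm{\e^{-s\L}} \leq \e^{s\dnorm{\L}}$ via the series expansion $\e^{-s\L} = \sum_{k\geq 0} \frac{(-s)^k}{k!}\L^k$ together with submultiplicativity and the triangle inequality. Therefore
\begin{align}
\dnorm{[\e^{r\L},\K]}
&= \dnorm{\e^{r\L}\int_0^r \e^{-s\L}[\K,\L]\e^{s\L}\,\intd s} \nonumber\\
&\leq \int_0^r \dnorm{\e^{r\L}}\,\dnorm{\e^{-s\L}}\,\dnorm{[\K,\L]}\,\dnorm{\e^{s\L}}\,\intd s \nonumber\\
&\leq \int_0^r \e^{s\dnorm{\L}}\,\dnorm{[\K,\L]}\,\intd s \nonumber\\
&\leq r\,\e^{r\dnorm{\L}}\,\dnorm{[\K,\L]}\, ,
\end{align}
where in the last line I bounded $\int_0^r \e^{s\dnorm{\L}}\,\intd s \leq r\,\e^{r\dnorm{\L}}$ crudely (the integrand is increasing, so it is at most $r$ times its value at $s=r$). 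An alternative, tighter route replaces $r\,\e^{r\dnorm{\L}}$ by $\frac{\e^{r\dnorm{\L}}-1}{\dnorm{\L}}$, but the stated form is what is used downstream, so I would present the crude bound.

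The main subtlety — and the step I would be most careful about — is the handling of $\dnorm{\e^{-s\L}}$: unlike $\e^{s\L}$, the inverse evolution is not a CPTP map, so one genuinely needs the exponential-of-the-diamond-norm estimate rather than simply $1$. I would make this rigorous either by the Dyson/power-series argument above, or by noting that $t\mapsto \dnorm{\e^{t\L}}$ satisfies $\dnorm{\e^{t\L}}\leq \e^{|t|\dnorm{\L}}$ for all real $t$ (a standard Grönwall-type estimate for the one-parameter semigroup generated by $\L$, valid since $\frac{\intd}{\intd t}\e^{t\L} = \L\e^{t\L}$ gives $\frac{\intd}{\intd t}\dnorm{\e^{t\L}} \leq \dnorm{\L}\dnorm{\e^{t\L}}$). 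A second minor point is justifying that the integral identity for $g(s)$ holds in norm (differentiation under the integral / fundamental theorem of calculus for the norm-continuous, finite-dimensional superoperator-valued function $g$), which is routine. Everything else — submultiplicativity of $\dnorm{\argdot}$, the triangle inequality for the integral, and monotonicity of the exponential — is standard.
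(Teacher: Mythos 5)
Your proof is correct and follows essentially the same route as the paper's: both express the commutator as an integral over $[0,r]$ of the derivative of the conjugated generator, producing an integrand of the form $\e^{\cdot\,\L}\,[\K,\L]\,\e^{\cdot\,\L}$, and then apply the triangle inequality, submultiplicativity of $\dnorm{\argdot}$, the normalization $\dnorm{\e^{s\L}}=1$ for $s\geq 0$, and the power-series bound $\dnorm{\e^{-s\L}}\leq \e^{s\dnorm{\L}}$. The only cosmetic difference is the parametrization (the paper manipulates $[\e^{-r\L},\K]$, so both exponentials in its integrand are backward evolutions whose bounds combine to $\e^{r\dnorm{\L}}$ for every $u$, whereas you integrate $\e^{s\dnorm{\L}}$ and then bound the integral crudely), and both arrive at the identical estimate.
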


\begin{proof}
	Importantly, the diamond norm is sub-multiplicative. Writing the exponential as a power series, using the triangle inequality, and supmultiplicativity, it follows that $\dnorm{\e^{- s \L}} \leq \e^{s \dnorm{\L}}$ for all $s \geq 0$.
	
	Now, taking the norm of the identity
	\begin{align}
	[\e^{-r\L}, \K]
	&=
	\left(\e^{-r\L} \K \e^{r\L} - \K \right) \e^{-r \L}
	\\&=
	\int_0^r \intd u \, \frac{\partial}{\partial u} \left( \e^{-u\L} \K \e^{u\L}\right)\e^{-r \L}\nonumber
	\\&=
	\int_0^r \intd u \,  \e^{-u \L} [\K,\L] \e^{-(r-u) \L} \, ,\nonumber
	\end{align}
	using the triangle inequality, sub-multiplicativity, and bounding the exponentials proves the lemma.
\end{proof}

An equation from the proof of Theorem~5 in Ref.~\cite{KliBarGog11} for our setting reads as
\begin{multline}\label{eq:first_order_calc}
\e^{t(\K+\L)} - \e^{t\K} \e^{t \L}
\\=
\int_0^t\int_0^r \intd r \,\intd u \, \e^{\K t} \e^{\L t} \e^{-u \K} [\L, \K] \e^{-(r-u) \K} \e^{r(\K+\L)}.
\end{multline}
This yields a first order Trotter-Suzuki approximation:

\begin{lemma}[Dissipative Trotter-Suzuki product formula \cite{KliBarGog11}]\label{lem:first_order}
	Let $\K$ and $\L$ be Liouvillians and $0\leq t \, \dnorm{\K}\leq 1$. Then
	\begin{equation}
	\dnorm{\e^{t(\K+\L)} - \e^{t \K} \e^{t\L}}
	\leq
	t^2 \dnorm{[\K,\L]} \, .
	\end{equation}
\end{lemma}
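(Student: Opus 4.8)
The plan is to take the integral identity~\eqref{eq:first_order_calc} as the starting point and estimate each factor of the integrand separately in diamond norm. First I would note that $\e^{t\K}$, $\e^{t\L}$ and $\e^{r(\K+\L)}$ are quantum channels (a sum of Liouvillians being again a Liouvillian, with the jump operators simply combined), so each of them has diamond norm at most $1$. The two remaining propagators $\e^{-u\K}$ and $\e^{-(r-u)\K}$ run backwards in time and are \emph{not} contractions; however, expanding the exponential as a power series and using the triangle inequality together with sub-multiplicativity of $\dnorm{\argdot}$ — exactly the one-line argument already used in the proof of Lemma~\ref{lem:commutator_T_K} — gives $\dnorm{\e^{-s\K}}\le \e^{s\dnorm{\K}}$ for every $s\ge 0$. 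Finally $\dnorm{[\L,\K]}=\dnorm{[\K,\L]}$ since the diamond norm is absolutely homogeneous.

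Inserting these bounds into~\eqref{eq:first_order_calc}, using the triangle inequality for the integral and sub-multiplicativity once more on the product in the integrand, and observing that $\e^{u\dnorm{\K}}\e^{(r-u)\dnorm{\K}}=\e^{r\dnorm{\K}}$ does not depend on $u$, I obtain
\begin{align*}
\dnorm{\e^{t(\K+\L)} - \e^{t\K}\e^{t\L}}
&\le \dnorm{[\K,\L]}\int_0^t \intd r \int_0^r \intd u\, \e^{r\dnorm{\K}} \\
&= \dnorm{[\K,\L]}\int_0^t r\,\e^{r\dnorm{\K}}\,\intd r .
\end{align*}
It then remains to show the scalar estimate $\int_0^t r\,\e^{r a}\,\intd r \le t^2$ for $a=\dnorm{\K}$ whenever $0\le at\le 1$. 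Evaluating the elementary integral and substituting $x=at\in[0,1]$, this is equivalent to $\e^x(x-1)+1\le x^2$; since $x-1\le 0$ on $[0,1]$, dividing by $x-1$ reduces it to the standard inequality $\e^x\ge 1+x$ (with equality at $x=1$, which is exactly the borderline case $t\dnorm{\K}=1$). This yields $\dnorm{\e^{t(\K+\L)} - \e^{t\K}\e^{t\L}}\le t^2\dnorm{[\K,\L]}$, as claimed.

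The only genuinely delicate point is the presence of the anti-chronological propagators $\e^{-u\K}$, which are not diamond-norm contractions and therefore drag the growing factors $\e^{r\dnorm{\K}}$ through the double integral; the hypothesis $t\dnorm{\K}\le 1$ is precisely what is needed to absorb them at the very end via $\e^x\ge 1+x$. Everything else is routine bookkeeping with the triangle inequality and sub-multiplicativity of $\dnorm{\argdot}$, both of which are already available from the preceding lemmas.
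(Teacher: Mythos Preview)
Your proof is correct and follows the same route as the paper: start from the integral identity~\eqref{eq:first_order_calc}, bound the forward propagators by $1$ and the backward ones by $\e^{s\dnorm{\K}}$, integrate to $\int_0^t r\,\e^{ra}\,\intd r = a^{-2}\bigl(1-\e^{at}(1-at)\bigr)$, and finish with a scalar inequality valid for $at\le 1$. The only cosmetic difference is the last step: the paper invokes $1+\tau+\tau^2-\e^\tau\ge 0$ for $\tau\le 1$, whereas your reduction to $\e^x\ge 1+x$ via division by $x-1\le 0$ is arguably the cleaner way to close the argument.
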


\begin{proof}
Taking the norm of Eq.~\eqref{eq:first_order_calc}, applying the triangle inequality, and using sub-multiplicativity together with the inequality
$\dnorm{\e^{r \L}} =1$ for $r\geq 0$ yields with $a\coloneqq \dnorm{\K}$
\begin{align}
	\dnorm{\e^{t(\K+\L)} - \e^{t \K} \e^{t\L}}
	&\leq \frac{1- \e^{at}(1-at)}{a^2}\dnorm{[\K,\L]}
\end{align}
Using that $1+\tau+\tau^2-\e^\tau \geq 0$ for all $\tau \leq 1$ with $\tau = at$ finishes the proof.
\end{proof}

Now we are ready to go to the next order.

\begin{lemma}[Dissipative 2nd oder Trotter-Suzuki product formula] \label{lem:second_order}
	Let $\K$ and $\L$ be Liouvillians and $0 \leq t(\dnorm{\L} + \dnorm{\K}/2) \leq 1$. Then
	\begin{align}
	 \dnorm{\e^{t(\K+\L)} - \e^{\frac t 2 \K} \e^{t\L} \e^{\frac t 2 \K} }
	 \leq
	 \frac{t^3}{3} \dnorm{[\K,\L]} \left(\dnorm{\L} + \kw2 \dnorm{\K} \right)\, .
	\end{align}
\end{lemma}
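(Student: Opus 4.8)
Here is how I would go about proving Lemma~\ref{lem:second_order}. The plan is to interpolate between $\e^{t(\K+\L)}$ and the symmetric product by a Duhamel argument, reduce the estimate to a bound on the ``defect'' operator that appears when the symmetric product is differentiated, and then to exhibit the crucial $O(t^3)$ cancellation explicitly.

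First set $P(s):=\e^{\frac s2\K}\e^{s\L}\e^{\frac s2\K}$ and $f(s):=\e^{(t-s)(\K+\L)}P(s)$, so that the quantity to be bounded equals $f(0)-f(t)=-\int_0^t f'(s)\,\intd s$. Differentiating gives $f'(s)=\e^{(t-s)(\K+\L)}B(s)$ with $B(s):=P'(s)-(\K+\L)P(s)$, and since $\e^{(t-s)(\K+\L)}$ is a quantum channel, $\dnorm{\e^{t(\K+\L)}-\e^{\frac t2\K}\e^{t\L}\e^{\frac t2\K}}\le\int_0^t\dnorm{B(s)}\,\intd s$. A short computation -- using that $\K$ commutes with $\e^{\frac s2\K}$ and $\L$ with $\e^{s\L}$, together with the Leibniz rule for commutators -- collapses the defect to the two-term form
\[
B(s)=[\e^{\frac s2\K},\L]\,\e^{s\L}\e^{\frac s2\K}-\tfrac12\,\e^{\frac s2\K}\,[\K,\e^{s\L}]\,\e^{\frac s2\K}\, .
\]

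The essential point -- and the main obstacle -- is that one may \emph{not} bound these two terms separately: each equals $\tfrac{s}{2}[\K,\L]$ to leading order, so that would give only $\dnorm{B(s)}=O(s)$ and hence a useless $O(t^2)$ bound; the second-order rate comes entirely from their mutual cancellation. To expose it, I would factor the outer $\e^{\frac s2\K}$'s out of $B(s)$ (which costs nothing in diamond norm, as they are channels) and rewrite both commutators via the Duhamel identity already used in the proof of Lemma~\ref{lem:commutator_T_K}, turning $B(s)$ into $\tfrac12\int_0^s\bigl(\mathcal J_1(w)-\mathcal J_2(w)\bigr)\,\intd w$ with $\mathcal J_1(w)=\e^{-\frac w2\K}[\K,\L]\e^{\frac w2\K}\e^{s\L}$ and $\mathcal J_2(w)=\e^{(s-w)\L}[\K,\L]\e^{w\L}$. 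Conjugating $[\K,\L]$ by the flows in each term produces the \emph{same} zeroth-order part $[\K,\L]\e^{s\L}$, which then cancels identically, leaving only nested commutators:
\[
\mathcal J_1(w)-\mathcal J_2(w)=-\Bigl(\int_0^{w/2}\!\e^{-u\K}[\K,[\K,\L]]\e^{u\K}\,\intd u+\int_0^{s-w}\!\e^{v\L}[\L,[\K,\L]]\e^{-v\L}\,\intd v\Bigr)\e^{s\L}\, .
\]

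From here the argument is mechanical: take diamond norms, use sub-multiplicativity together with $\dnorm{\e^{r\L}}=1$ and $\dnorm{\e^{-r\K}}\le\e^{r\dnorm\K}$, bound $\dnorm{[\K,[\K,\L]]}\le2\dnorm\K\dnorm{[\K,\L]}$ and $\dnorm{[\L,[\K,\L]]}\le2\dnorm\L\dnorm{[\K,\L]}$, and carry out the elementary exact $w$- and then $s$-integrations. The surviving leading term is $\tfrac{t^3}{6}\dnorm{[\K,\L]}\bigl(\dnorm\L+\tfrac12\dnorm\K\bigr)$, and the remaining contributions are controlled by invoking the hypothesis $t(\dnorm\L+\dnorm\K/2)\le1$ to bound the exponential prefactors; one then checks that the total stays below $\tfrac{t^3}{3}\dnorm{[\K,\L]}\bigl(\dnorm\L+\tfrac12\dnorm\K\bigr)$ (in fact the constant produced this way is somewhat smaller than $1/3$, so there is room to spare). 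The only delicate step is the cancellation above: choosing the integral representations of the two commutators so that their zeroth-order parts coincide, making the integrand manifestly of order $s^2$, is exactly what upgrades the first-order formula to the second-order one; everything after that is bookkeeping of explicit integrals.
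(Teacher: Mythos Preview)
Your argument is correct in outline: the defect $B(s)$ is computed correctly, the factorisation $B(s)=\e^{\frac s2\K}C(s)\e^{\frac s2\K}$ is legitimate, and the cancellation you display between $\mathcal J_1$ and $\mathcal J_2$ via the nested commutators $[\K,[\K,\L]]$ and $[\L,[\K,\L]]$ is genuine. The resulting integrand has norm $\dnorm{[\K,\L]}\bigl[(\e^{\frac w2\dnorm\K}-1)+(\e^{(s-w)\dnorm\L}-1)\bigr]$, whose triple integral over $w\in[0,s]$, $s\in[0,t]$ has leading term $\tfrac{t^3}{6}\dnorm{[\K,\L]}(\dnorm\L+\tfrac12\dnorm\K)$; the subleading terms are indeed absorbed by the target constant $1/3$ under the hypothesis $t(\dnorm\L+\dnorm\K/2)\le 1$, though the final scalar inequality you must verify is a little more involved than you suggest.

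The paper's route is different. Instead of your forward interpolation it writes $\e^{-\frac t2\K}\e^{-t\L}\e^{-\frac t2\K}\e^{t(\K+\L)}-\id$ as an $s$-integral; the derivative produces the commutator $\bigl[\tfrac12\K+\L,\,\e^{-s\L}\e^{-\frac s2\K}\bigr]$, which \emph{vanishes at $s=0$} and is therefore rewritten as a further $r$-integral of its own derivative. A Leibniz split then leaves $[\K,\e^{-r\L}]$ and $[\L,\e^{-\frac r2\K}]$, which are bounded directly by Lemma~\ref{lem:commutator_T_K}; no nested commutators ever appear, and the endgame collapses to the single elementary inequality $2(1-\e^\tau)+\tau(1+\e^\tau)\le\tau^3/3$ for $\tau\in[0,1]$. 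So your approach makes the second-order cancellation explicit at the level of double commutators, which is conceptually transparent but leaves a slightly messier final estimate, whereas the paper packages the same cancellation structurally (via ``the commutator vanishes at $s=0$, hence integrate its derivative'') and reduces everything to one clean scalar inequality. Both arguments are forced to pay for backward flows through $\dnorm{\e^{-r\K}}\le\e^{r\dnorm\K}$.
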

This lemma provides a new error bound that might be of independent interest. It remains an open problem to also find higher order Trotter-Suzuki approximations for Markovian dynamics. The crucial point is, that unlike to unitary dynamics, one cannot evolve backwards in time and obtain a completely positive time evolution in general.

\begin{widetext}
\begin{proof}
	Quantum channels are normalized in diamond norm, hence $\dnorm{\e^{r \L}} =1$ for $r\geq 0$.
	From the identity
	\begin{align}
	\e^{-\tfrac{t}{2} \K } \e^{-t \L} \e^{-\tfrac{t}{2} \K } \e^{t(\K+\L)} - \id
	&=
	\int_0^t \intd s \frac{\partial}{\partial s} \left(
		\e^{-\tfrac{s}{2} \K } \e^{-s \L} \e^{-\tfrac{s}{2} \K } \e^{s(\K+\L)}
	\right)
	\\ &=
	-\int_0^t \intd s \, \e^{-\frac{s}{2} \K}
	\left[\tkw 2 \K + \L, \e^{-s\L}\e^{-\frac{s}{2} \K}\right] \e^{s(\K+\L)}\nonumber
	\\ &=
	-\int_0^t \intd s \,\int_0^s \intd r \, \e^{-\frac{s}{2} \K}
		\left[\tkw 2 \K + \L,\frac{\partial}{\partial r}\left(\e^{-r\L}\e^{-\frac{r}{2} \K}\right)\right] \e^{s(\K+\L)}\nonumber
	\\ &=
	\int_0^t \intd s \,\int_0^s \intd r \, \e^{-\frac{s}{2} \K}
		\left[\tkw 2 \K + \L,\e^{-r\L}(\tkw 2 \K + \L)\e^{-\frac{r}{2} \K}\right] \e^{s(\K+\L)}\nonumber
	\\ &=
	\int_0^t \intd s \,\int_0^s \intd r \, \e^{-\frac{s}{2} \K}\nonumber
	\left(
	\tkw 2 [\K,\e^{-r\L}] (\tkw 2 \K + \L) \e^{-\frac{r}{2} \K}
	+\e^{-r \L} (\tkw 2 \K + \L) [\L,\e^{-\frac r 2 \K}]
	\right) \e^{s(\K+\L)}\nonumber
	\end{align}
	and using Lemma~\ref{lem:commutator_T_K} we conclude that
	\begin{align}
	\dnormb{\e^{t(\K+\L)} - & \e^{\frac t 2 \K} \e^{t\L} \e^{\frac t 2 \K} }
	\\ \nonumber&\leq
	\int_0^t \intd s \,\int_0^s \intd r
	\dnorm{\e^{\frac{t-s}{2} \K}
		\left(
		\tkw 2 [\K,\e^{-r\L}] (\tkw 2 \K + \L) \e^{-\frac{r}{2} \K}
		+\e^{-r \L} (\tkw 2 \K + \L) [\L,\e^{-\frac r 2 \K}]
		\right) \e^{s(\K+\L)}}
	\\ \nonumber&\leq
	\int_0^t \intd s \,\int_0^s \intd r
	\dnorm{
		\tkw 2 [\K,\e^{-r\L}] (\tkw 2 \K + \L) \e^{-\frac{r}{2} \K}
		+\e^{-r \L} (\tkw 2 \K + \L) [\L,\e^{-\frac r 2 \K}]
		}
	\\ \nonumber&\leq
	\int_0^t \intd s \,\int_0^s \intd r \left(
	\tkw 2 \dnorm{[\K,\e^{-r\L}]} (\tkw 2\dnorm{\K} + \dnorm{\L}) \e^{\frac{r}{2} \dnorm{\K}}
		+\e^{r \dnorm{\L}} (\tkw 2 \dnorm{\K} + \dnorm{\L}) \dnorm{[\L,\e^{-\frac r 2 \K}]}
	\right)
	\\ \nonumber&\leq \left(\dnorm{\L} + \kw2 \dnorm{\K}\right)
	\int_0^t \intd s \,\int_0^s \intd r \left(
	\frac{r}{2} \e^{r \dnorm{\L}}\dnorm{[\K,\L]} \e^{\frac{r}{2} \dnorm{\K}}
	+\frac{r}{2} \, \e^{r \dnorm{\L}}  \dnorm{[\L,\K]} \e^{\frac r 2 \dnorm{\K}}
	\right)
	\\ \nonumber&=
	a\int_0^t \intd s \,\int_0^s \intd r \,
	r \,  \e^{r a}  \dnorm{[\K,\L]}
	\\ \nonumber&=
	\frac{2(1-\e^{a t}) + a t (1+\e^{a t})}{a^2} \dnorm{[\K,\L]}
	\end{align}
	where $a \coloneqq \dnorm{\L} + \kw2 \dnorm{\K}$.
	The fact that $2(1-\e^\tau) + \tau (1+\e^\tau) \leq \tau^3/3$ for all $\tau \in [0,1]$ with $\tau = at$ finishes the proof.
\end{proof}
\end{widetext}
Using first order Trotter-Suzuki approximation (Lemma~\ref{lem:first_order}) for the simulation of the dynamics of a spin chain of length $N$ with $m=t/\Delta t$ times steps yields a total Trotter-Suzuki error scaling as
\begin{align}
\text{TS-error}_1
\in
\landauO \left( a^2 N t^2/m \right) \, ,
\end{align}
where the local Liouvillian terms are bounded by $a$, i.e., $\dnorm{\L_{j,j+1}} \leq a$.
Using the second order Trotter-Suzuki approximation (Lemma~\ref{lem:second_order}) instead, yields (see Eq.~\eqref{eq:TZ-error_in_pf} below)
\begin{align}
\text{TS-error}_2
\in
\landauO \left( a^3 t^3 N^2 /m^2 \right) \, .
\end{align}
Hence, for a given maximum diamond norm error $\epsilon>0$, a number of local unitaries scaling as
\begin{align}
m_1 &= \landauO\left( \frac{(a\, t)^2 N }{\epsilon} \right),
\\
m_2 &= \landauO\left( \frac{(a\,t)^{3/2} N }{\sqrt{\epsilon}} \right)
\end{align}
is sufficient for the simulation.

\subsection{Simulation method expanded}\label{sec:method}
In this section we give some more specifics about our algorithm.
As described in the main text, locally purified states are taken as a variational set in the algorithm. In particular, we make use of the fact that this form can be preserved under the application of nearest-neighbor unitary operators and local channels. The same is possible if we consider quantum channels acting on neighboring spins once we preprocess and optimize the distribution of the Kraus indices (see Section \ref{app:nn_ch}).

In order to discuss the details of our algorithm let us recall some properties of MPS and MPOs \cite{Schollwock201196} which directly carry over to our setting. As in the main text, we decompose the density matrix $\rho$ as $\rho=X X^\dagger$ and consider the MPO description of $X$ in terms of a set of local tensors $\{\locT\}_{l=1}^N$. Using the notation
\begin{equation}
 [n]\coloneqq \{1,2,\dots,n\} \, ,
\end{equation}
let us also fix the following convention for the four indices of the local tensors $\locT$, namely that we will write
\begin{equation}
\locT = \left(
  \locTel{[l], s,r}{n,m}
	      \right)_{s \in [d], r \in [K], n \in [D_{l-1}], m\in [D_{l}]}
\end{equation}
for the left bond dimension $D_{l-1}$, the Kraus dimension $K$, the right bond dimension $D_{l}$ and the physical dimension $d$.
The decomposition of a state $\rho$ into local tensors $\{\locT\}$ is not unique, but allows for some gauge-freedom. In addition to the usual gauge freedom on the bond indices there is an additional gauge freedom with respect to the Kraus index.
This allows us to use singular value decompositions in order to achieve a mixed normalized form of the local tensors $\{\locT\}$ with respect to the orthogonality centre $l_0$ such that
\begin{align}
    \sum_{s,r,n} \locTel{[l], s,r}{n,m}\; \overline{ \locTel{[l],s,r}{n,m'}} &= \delta_{m,m'} ,& \text{if } l<l_0,\\
    \sum_{s,r,m} \locTel{[l], s,r}{n,m}\; \overline{ \locTel{[l],s,r}{n',m}} &= \delta_{n,n'} ,& \text{if } l> l_0\;.
\end{align}
Next we show how the local tensors $\{\locT\}$ transform under the application of a quantum channel $T$ that acts non-trivially only on the one lattice site $l$. We can find local Kraus operators $B_q = \1\otimes B_q^{[l]}\otimes\1$ such that
\begin{align}
 T(\rho)
 &=
 \sum_{q = 1}^{k} B^{[l]}_q X (B_q X)^\dagger
 \\ &=
 \left(\sum_{q=1}^{k} ( B^{[l]}_q X ) \otimes \bra{q}\right)\left(\sum_{q'=1}^{k} ( B^{[l]}_{q'} X )\otimes \bra{q'}\right)^\dagger\;\,\nonumber
\end{align}
with $k$ the Kraus rank of channel $T$.
Expanding $X$ and $B^{[l]}$ we find
\begin{align}
   &\sum_{q=1}^{k} (B^{[l]}_q X) \otimes \bra{q} =\nonumber \\
   &= \sum_{q,\vec{r},\vec{s}} \sum_{\bar{s} = 1}^{d}
   [X]^{s_1 ,\ldots,\bar{s},\ldots s_N}_{r_1 ,\ldots , r_N} B^{[l],s_l,\bar{s}}_q
   \ketbra{\vec{s}}{\vec{r}}\otimes\bra{q}\;.
\end{align}
Using the definition of $X$, we see that we can absorb the factors $B^{[l],s_l,\bar{s}}_q\bra{q}$ into the local tensor $A^{[l]}$ and set
\begin{align}
  \widetilde{A}^{[l]}_{m,n}=\sum_{q,s,\bar{s},r} A_{m,n}^{[l],\bar{s},r} B^{[l],s,\bar{s}}_q \ketbra{s}{r,q} =
  \sum_{s,r^\prime} \widetilde A_{m,n}^{[l],s,r^\prime} \ketbra{s}{r^\prime},
 \end{align}
by combining the indices $q$ and $r$ into a combined Kraus index $r^\prime$,
whose dimension $K' = k \cdot K$ is the product of the original Kraus dimension $K$ in the locally purified
tensor network, and the Kraus rank $k$ of the quantum channel $T$.
This procedure is sketched in Figure \ref{fig:sketchy}(b) (bottom panel),
and shows some similarities with the contraction scheme for projected entangled pair states discussed
in Ref.~\cite{Lubasch2014}.
Note that the non-uniqueness of the Kraus-operators decomposition directly implies the aforementioned gauge-freedom of the Kraus index
$q$ (and thus also $r'$) with respect to the unitary group.

\subsubsection{Nearest neighbor couplings}
\label{app:nn_ch}
Here we discuss in greater detail the scenario of nearest neighbor Liouvillians
\begin{equation}\label{eq:local_Liouvillian}
 \L = \sum_{l=1}^{N-1} \L^{[l,l+1]} \, ,
\end{equation}
where the local Liouvillian terms are written in Lindblad form as
\begin{equation}
 \L^{[l,l+1]} = -\i \, [H^{[l,l+1]}, \argdot ] + \sum_j L^{[l,l+1]}_j \, ,
\end{equation}
and where each Hamiltonian term $H^{[l,l+1]}$ and Lindblad operator $L^{[l,l+1]}_j$ acts only on the two neighboring sites $l$ and $l+1$ non-trivially. We split the dynamical map $\e^{\tau \mathcal{L}}$ into odd and even terms using the Trotter-Suzuki approximation from Lemma~\ref{lem:second_order}
\begin{equation}\label{eq:Trotter_even_odd}
 \e^{\tau \mathcal{L}} = \e^{\tau \mathcal{L}_o /2} \e^{\tau \mathcal{L}_e} \e^{\tau \mathcal{L}_o /2} + O(\tau^3)
\end{equation}
where
\begin{equation}\label{eq:L_even_odd}
\L_o \coloneqq  \sum_l \L^{[2l-1,2l]} \ \text{ and } \
\L_e \coloneqq \sum_l \L^{[2l,2l+1]}.
\end{equation}
By commutativity it trivially holds that
$\e^{\tau \mathcal{L}_o} = \bigotimes_{l} \e^{\tau \L^{[2l-1,2l]}}$ and
$\e^{\tau \mathcal{L}_e} = \bigotimes_{l} \e^{\tau \L^{[2l,2l+1]}}$.
With respect to row-wise vectorization of operators, the super-operator $\L^{[l,l+1]}$ takes the form
\begin{multline} \label{eq:neichan}
 \L^{[l,l+1]} = -\i\, H^{[l,l+1]} \otimes \1 + \i\, \1 \otimes \bar{H}^{[l,l+1]} + \\
 + \sum_{j} \left( L^{[l,l+1]}_{j} \otimes \bar L^{[l,l+1]}_j - \frac{1}{2}\;
 (L^{[l,l+1]\dagger}_j L^{[l,l+1]}_j)\otimes \1 + \right.  \\
 \left. - \frac{1}{2}\; \1 \otimes (L^{[l,l+1]T}_j \bar L^{[l,l+1]}_j)\right).
\end{multline}
The exponential $\e^{\tau \L^{[l,l+1]}}$ is calculated numerically, then Choi transformed \cite{Choimap}
and finally Cholesky decomposed in order to obtain its Kraus decomposition into operators
\begin{equation}
B^{[l,l+1]}_q = \left( B^{[l,l+1]s_{\rm out}, s_{\rm in}}_q \right)_{s_{\rm out}, s_{\rm in} \in [d^2]}
\end{equation}
which satisfies
\begin{equation}	
\sum_{q=1}^k B^{[l,l+1]}_q \otimes \bar B^{[l,l+1]}_q = \e^{\tau \mathcal{L}^{[l,l+1]}}
\end{equation}
 and
$\sum_{q=1}^k B^{[l,l+1]\dagger}_q B^{[l,l+1]}_q = \1$
for some Kraus rank $k \leq d^4$.

A sensitive drawback, with respect to the on-site channel scenario,
is that contracting the two-site Kraus map $B^{[l,l+1]}_q$ into the purification operator $X$
enlarges multiple tensor network links simultaneously (we remind that in the previous case only one link
was enlarged per contraction). This encourages us to identify a clever contraction-compression
scheme which can keep the errors small and the computation efficient.

Namely, in order to allow for a practical contraction of $B^{[l,l+1]}_q$ into the tensor $\locT$ and $\locT[l+1]$ such that the result is again a pair of local tensors it is useful to decompose the Kraus operators according to
\begin{equation}
 B_{q}^{[l,l+1] s^{[l]}_{\rm out}\cdot s^{[l+1]}_{\rm out} ,s^{[l]}_{\rm in}\cdot s^{[l+1]}_{\rm in}}
  =
  \sum_{m=1}^{D'} B_{\leftarrow,q_1,m}^{[l]s^{[l]}_{\rm out},s^{[l]}_{\rm in}}\; \cdot\;
  B_{\rightarrow,q_2,m}^{[l+1]s^{[l+1]}_{\rm out},s^{[l+1]}_{\rm in}} \; ,
\end{equation}
where $s^{[l]}_x \in [d]$, the dimension $D'$ plays the role of a bond dimension of the channel, and
the Kraus rank $k$ has been arbitrarily split between the two sites into $k_1$ and $k_2$
so that $k_1 \cdot k_2 = k$ (say $q_1 \in \{1 ,\dots,  k_1\}$, $q_2 \in \{1 ,\dots, k_2\}$
and $q = k_2 (q_1-1) + q_2$).
It is clear that this decomposition is not unique in many ways.
Moreover, besides the application of a gauge transformation through the $D'$ bond link,
and the freedom of spliting $k$ into $k_1$ and $k_2$,
we can contract an arbitrary unitary matrix $U$ into the Kraus index $q$ of $B_q^{[l]}$.
From a computational point of view it would be preferable to find a matrix $U$ and a distribution of the Kraus indices such that the resulting bond dimension $D'$ is minimal, because in that case the growth in bond dimension linking
$A^{[l]}$ to $A^{[l+1]}$ would be minimal.
Finding such optimal $U$ poses indeed a difficult non-linear optimization problem which might not have an efficient solution.
Notice, however, that this decomposition can be seen as a single preprocessing procedure of the Trotter-Suzuki steps of the simulation
(for a time-independent master equation).
So in this sense it comprises at most a constant (non-scaling) overhead that we have to invest before starting our algorithm.

In practice, we perform an iterative, direct search of the optimized $U$ as follows.
Given $k_1$ and $k_2$ so that $k_1 k_2 = k$, and a unitary $k \times k$ matrix $U$. We evaluate
\begin{equation}
  C_{q_1, q_2}^{s^{[l]}_{\rm out}, s^{[l+1]}_{\rm out} ,s^{[l]}_{\rm in}, s^{[l+1]}_{\rm in}} =
  \sum_{q=1}^{k}
  U^{q}_{q_1,q_2}
  \,B_q^{[l,l+1]s^{[l]}_{\rm out}, s^{[l+1]}_{\rm out} ,s^{[l]}_{\rm in}, s^{[l+1]}_{\rm in}}.
\end{equation}
Then we decompose the tensor $C$ via singular value decomposition as
\begin{equation}
  C_{q_1,q_2}^{s^{[l]}_{\rm out}, s^{[l+1]}_{\rm out} ,s^{[l]}_{\rm in}, s^{[l+1]}_{\rm in}}
  = V_{q_1,\mu}^{s^{[l]}_{\rm out},s^{[l]}_{\rm in}} \,S_\mu \,\bar{W}_{q_2,\mu}^{s^{[l+1]}_{\rm out},s^{[l+1]}_{\rm in}},
\end{equation}
and we calculate the Shannon entropy of the induced discrete probability measure $p_\mu = S_{\mu}^2 / \sum_{\mu'} S_{\mu'}^2$.
Ultimately, we use such entropy $\mathcal{E}$ as a figure of merit for the direct search, and try to
find the $U$ which minimizes that quantity.
To do this, we apply a Nelder-Mead simplex algorithm, over the space $U(k)$ of $k \times k$ unitary matrices.

\subsection{Error analysis}\label{supp:trac_cert}
Our goal in this section is to derive an error bound for the algorithm presented in the main body of the paper, that includes the contribution from the Trotter-Suzuki approximation as well as the contribution from the local compression steps. Our figure of merit in this context is the fidelity as well as the trace-norm distance of the evolved state given by the algorithm in comparison to the state after the full time evolution.
Density operators are positive by definition, so we know that $\rho$ can be decomposed as $\rho = X^\dagger X$. Hence, our decomposition of $\rho$ into local tensor $A^{[l]}$ can be seen as choosing a particular $X$. Note that bond dimensions of exact local purifications can grow with the system size for certain density operators \cite{Gemma}. However, we assume that the initial state of our simulation already has a locally purified form, which is fulfilled, e.g., for product states and maximally mixed states.

From Lemma \ref{lem:2normdist} we see that in order to obtain a one-norm bound for $\rho$ we have to control the $2$-norm distance between two tensors $X$ and $Y$. To this end let us first define the compression error that introduced by throwing away singular values.

\begin{definition}[Discarded weight]\label{def:dw}
For a local tensor $\locT$ that is compressed by discarding singular values with respect to one of its indices, we define the \emph{discarded weight} $\delta$ of the compression as the introduced two norm error
  \begin{align}
   \dw \coloneqq \left({\sum_{s_i \text{ discarded}} s_i^2}\right)^{1/2}\;.
  \end{align}
\end{definition}

Equivalently, the discarded weight is the Frobenius norm error introduced by discarding the singular values. The following lemma also takes the renormalisation included in a compression step into account.

\begin{lemma}[Discarded weight]\label{lem:discarded_weight}
  Let $X$ be a locally purified description of a quantum state $\rho$ with local tensors $\{A^{[l]}\}$ that is in mixed canonical form with respect to a local tensor $A^{[l_{cp}]}$. Then, if $\widetilde X$  denotes the tensor, where $A^{[l_{cp}]}$ is compressed by a discarded weight $\dw$ either with respect to its Kraus or bond dimensions, we have
  \begin{equation}
    \normb{X-\tilde X}^2_2 = 2 \left(1-\sqrt{1-\dw^2}\right)  .
    \label{eq:estimator}
  \end{equation}
  \end{lemma}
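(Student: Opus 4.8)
The plan is to reduce the statement, via the mixed canonical form, to a single-tensor computation and then evaluate that explicitly using the singular value decomposition. First I would use that $X$ is in mixed canonical form with orthogonality centre $l_{cp}$: every tensor to the left of $l_{cp}$ is left-normalised and every tensor to the right is right-normalised. Hence, for any locally purified tensor network $Z$ that coincides with $X$ at all sites $l \neq l_{cp}$, contracting $Z$ with its conjugate in order to evaluate $\Scp{Z}{Z}$ makes the physical, Kraus, and bond indices of the non-central tensors contract pairwise to identities on the bond links, so that $\norm{Z}_2 = \normb{Z^{[l_{cp}]}}_2$, where $Z^{[l_{cp}]}$ denotes the central tensor of $Z$. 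Applying this to $Z = X$, together with $\Tr\rho = \Tr(X X^\dagger) = \norm{X}_2^2 = 1$, yields $\normb{A^{[l_{cp}]}}_2 = 1$. Applying it to $Z = X - \widetilde X$, whose central tensor is $A^{[l_{cp}]} - \widetilde A^{[l_{cp}]}$ (with $\widetilde A^{[l_{cp}]}$ the compressed, renormalised local tensor), reduces the claim to
\begin{equation}
  \normb{A^{[l_{cp}]} - \widetilde A^{[l_{cp}]}}_2^2 = 2\left(1-\sqrt{1-\dw^2}\right).
\end{equation}

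Next I would introduce the singular value decomposition of $A^{[l_{cp}]}$ with respect to the bipartition of its four indices along which the compression is carried out --- the Kraus index on one side for a Kraus-dimension compression, one bond index on one side for a bond-dimension compression --- writing $A^{[l_{cp}]} = \sum_i s_i\, u_i \otimes \bar v_i$ with orthonormal families $\{u_i\}$ and $\{v_i\}$ and $s_1 \geq s_2 \geq \dots \geq 0$. Normalisation gives $\sum_i s_i^2 = 1$. By Definition~\ref{def:dw}, compression keeps the largest singular values and discards a set of total weight $\dw^2 = \sum_{i\ \mathrm{discarded}} s_i^2$; the truncated (not yet renormalised) tensor therefore has squared Frobenius norm $\sum_{i\ \mathrm{kept}} s_i^2 = 1-\dw^2$, so the renormalised compressed tensor is $\widetilde A^{[l_{cp}]} = (1-\dw^2)^{-1/2}\sum_{i\ \mathrm{kept}} s_i\, u_i\otimes\bar v_i$.

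Finally I would expand the difference using orthonormality of the rank-one terms $u_i\otimes\bar v_i$: the discarded terms contribute $\sum_{i\ \mathrm{discarded}} s_i^2 = \dw^2$, while each kept term contributes $s_i^2\bigl(1-(1-\dw^2)^{-1/2}\bigr)^2$, so that
\begin{equation}
  \normb{A^{[l_{cp}]} - \widetilde A^{[l_{cp}]}}_2^2 = \dw^2 + \Bigl(1-\tfrac{1}{\sqrt{1-\dw^2}}\Bigr)^2(1-\dw^2).
\end{equation}
Setting $c \coloneqq \sqrt{1-\dw^2}$ the right-hand side equals $(1-c^2) + (c-1)^2 = 2(1-c)$, which is the claimed identity. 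There is no serious obstacle in this argument; the only points requiring care are the first reduction --- checking that the mixed-canonical isometry conditions genuinely contract the surrounding network away for the \emph{difference} $X-\widetilde X$ and not merely for $X$ --- and consistently tracking the renormalisation factor $(1-\dw^2)^{-1/2}$. One should also note that, although the SVD bipartition is different for Kraus and for bond compression, the computation is insensitive to this choice.
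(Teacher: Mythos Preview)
Your proof is correct and follows essentially the same approach as the paper: both use the mixed canonical form to reduce the global Frobenius-norm computation to the single central tensor, perform the SVD there, and track the renormalisation factor $(1-\dw^2)^{-1/2}$. The only cosmetic difference is that the paper computes the overlap $\Scp{X}{\widetilde X}=\sqrt{1-\dw^2}$ and then invokes $\norm{X-\widetilde X}_2^2 = 2 - 2\Scp{X}{\widetilde X}$, whereas you compute $\normb{A^{[l_{cp}]}-\widetilde A^{[l_{cp}]}}_2^2$ directly; your linearity observation (that $X-\widetilde X$ is itself a tensor network with the same isometric side tensors and central tensor $A^{[l_{cp}]}-\widetilde A^{[l_{cp}]}$) is exactly what justifies this shortcut.
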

\begin{proof}
\begin{figure}
  \centering
  \includegraphics[width=0.7\linewidth,angle=270]{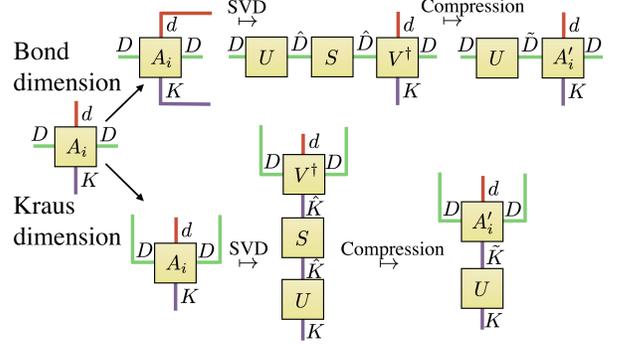}
  \caption{Compression schemes for Kraus and bond dimension. Note that for the compression of the Kraus-dimension we can drop the remaining unitary $U$ due to gauge freedom.}
  \label{fig:app:comp}
\end{figure}
  The compression of $X$ with respect to the local tensor $A^{[l_{cp}]}$ is done via a singular value decomposition $V S U^\dagger$ of $A^{[l_{cp}]}$ with respect to a specific distribution of its tensor indices depending on whether we compress the bond or the Kraus dimension (see Fig.\ \ref{fig:app:comp}). The diagonal matrix $S$ contains the singular values of $A^{[l_{cp}]}$ in decreasing order.

  The compression step itself consists of discarding the $L$ smallest singular values $s_l$ of $S$  such that $\dw =
  ({\sum_{l=n-L+1}^n s_l^2})^{1/2}$. This can be done by substituting the diagonal matrix $S$ by a matrix $\tilde S$ where the last $L$ diagonal entries have been set to zero. In order to renormalize the corresponding full tensor $\widetilde X$ we  use the assumption that $X$ was in a mixed canonical form with respect to $A^{[l_{cp}]}$, which implies that
  \begin{align}
  \norm{X} = \Tr(X^\dagger X)=\Tr(S^2)\;.
\end{align}
 Since $\Tr(\widetilde S^2) + \dw^2 = 1$ we have to divide $\widetilde S$ by $\sqrt{1-\dw^2}$ in order to ensure $\normn{\widetilde{X}}=1$.
  Since $X$ and $\widetilde X$ are now normalized we just have to compute their Hilbert-Schmidt-overlap in order to determine their $2$-norm-distance. However, since all local tensors $A^{[l]}$ for $l\neq l_{cp}$ are identical for $X$ and $\widetilde X$ and both are in mixed canonical form, their scalar product can be computed directly from $S$ and $\widetilde S$ via (see also Fig.\ \ref{fig:app:HSscp})
  \begin{eqnarray}
    \Scp{X}{Y} = \Tr(S \tilde S) = \frac{\sum_{l=1}^{n-L} s_l^2}{\sqrt{1-\dw^2}} = \sqrt{1-\dw^2}\;.
  \end{eqnarray}
  Using again Eq.\ \eqref{eq:2nscp} finishes the proof.
  \begin{figure}
  \centering
     \begin{overpic}[width=0.8\linewidth, unit=1pt]{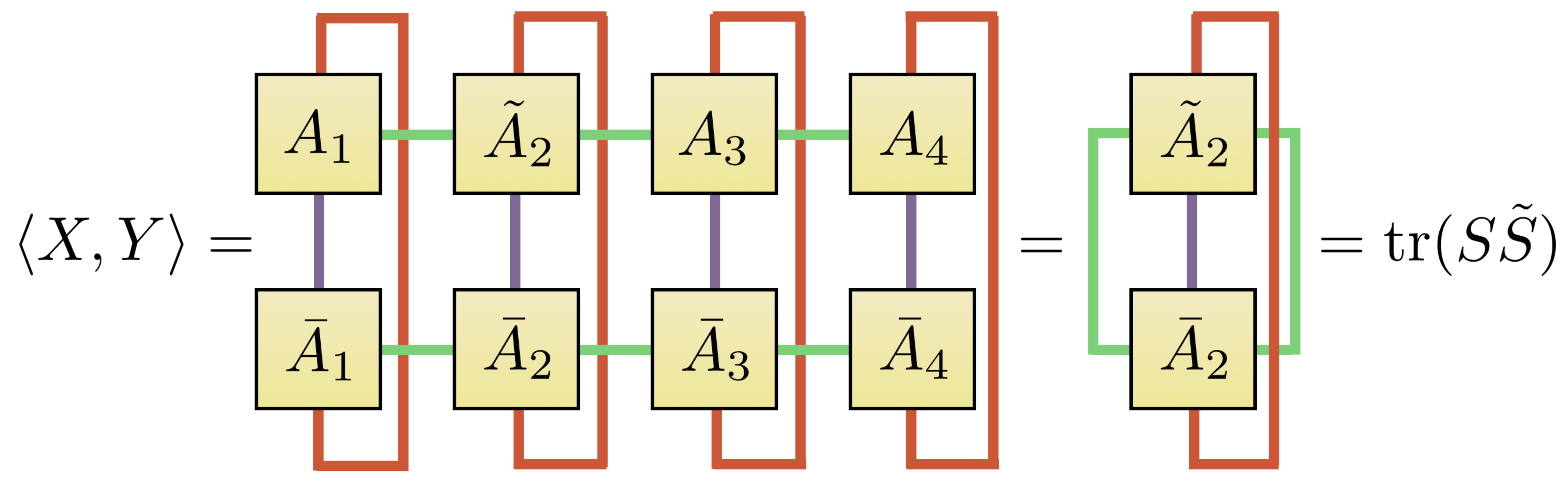}
     \put(65,14){\small $=$}
     \end{overpic}
    \caption{ Hilbert Schmidt inner product of a locally purified tensor $X$ with its compression $\widetilde X$.}
    \label{fig:app:HSscp}
  \end{figure}
\end{proof}

Equipped with these prerequisites we can finally write down the error bounds for our algorithm in terms of the trace norm.

\begin{theorem}[Trace norm certificate]\label{thm:certificate}
 Let
 \begin{equation}
  \L = \sum_{l=1}^{N-1} \L^{[l,l+1]}
 \end{equation}
 be a nearest neighbor Liouvillian on a chain of $N$ spins, where the local terms are bounded as
 $\dnorm{\L^{l,l+1}} \leq b$. Moreover, let $\rho=X X^\dagger$ be some initial state with purification operator $X$. Denote by $\rho_t \coloneqq \e^{t\L}(\rho)$ the exact time evolved state and by $\tilde \rho_t$ the state evolved according to our algorithm with $m$ times steps in the second order Trotter-Suzuki approximation and assume that all discarded weights are bounded by $\dw_{\max{}}$. Then
 \begin{equation}
    \normn{\rho_t - \tilde \rho_t}_1
    \leq
    \frac{(tb)^3 N^2}{4m^2} + 6 (2m+1) N \dw \, .
 \end{equation}
 \end{theorem}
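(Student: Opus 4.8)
The plan is to bound the error via the triangle inequality, separating the Trotter--Suzuki contribution from the compression contribution. Write $\tau \coloneqq t/m$ for the time step and let $S_\tau \coloneqq \e^{\tau\L_o/2}\e^{\tau\L_e}\e^{\tau\L_o/2}$ be one exact second--order Trotter step, with $\L_o,\L_e$ as in \eqref{eq:L_even_odd}; let $\rho_t^{\mathrm{Trot}} \coloneqq S_\tau^{\,m}(\rho)$ be the state obtained by iterating $S_\tau$ \emph{without} compression. Since $\L$ is time--independent, $\e^{t\L} = (\e^{\tau\L})^m$, so $\norm{\rho_t - \tilde\rho_t}_1 \le \norm{\rho_t - \rho_t^{\mathrm{Trot}}}_1 + \norm{\rho_t^{\mathrm{Trot}} - \tilde\rho_t}_1$, and I would treat the two summands separately.

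For the Trotter term I would use the telescoping identity $(\e^{\tau\L})^m - S_\tau^{\,m} = \sum_{k=0}^{m-1}(\e^{\tau\L})^{m-1-k}(\e^{\tau\L} - S_\tau)\,S_\tau^{\,k}$ and then apply sub--multiplicativity of the diamond norm together with the fact that $\e^{\tau\L}$ and $\e^{\tau\L_o},\e^{\tau\L_e}$ are quantum channels (unit diamond norm) to get $\norm{\rho_t - \rho_t^{\mathrm{Trot}}}_1 \le m\,\dnorm{\e^{\tau\L} - S_\tau}$. Lemma~\ref{lem:second_order}, with its $\K$ and $\L$ set to $\L_o$ and $\L_e$, then gives $\dnorm{\e^{\tau\L} - S_\tau} \le \tfrac{\tau^3}{3}\dnorm{[\L_o,\L_e]}\bigl(\dnorm{\L_e} + \tfrac12\dnorm{\L_o}\bigr)$, provided $\tau$ is small enough that the lemma's hypothesis $\tau(\dnorm{\L_e} + \tfrac12\dnorm{\L_o}) \le 1$ holds (this requires $m$ of order $Nbt$, a constraint I would state explicitly). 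The three diamond norms are controlled by locality: subadditivity gives $\dnorm{\L_o},\dnorm{\L_e} = O(Nb)$, while in $[\L_o,\L_e] = \sum_{i,j}[\L^{[i,i+1]},\L^{[j,j+1]}]$ only the $O(N)$ pairs of bonds sharing a site survive, each with $\dnorm{[\L^{[i,i+1]},\L^{[j,j+1]}]} \le 2b^2$ by sub--multiplicativity, so $\dnorm{[\L_o,\L_e]} = O(Nb^2)$. Multiplying out, using $\tau^3 = t^3/m^3$ and summing the $m$ steps, and tracking the constants, yields the first term $(tb)^3 N^2/(4m^2)$.

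For the compression term I would model the algorithm as an alternating sequence of exact CPTP operations (the two--site Kraus maps $B^{[l,l+1]}_q$ of Section~\ref{app:nn_ch} contracted into the local tensors, and re--gauging sweeps, which leave the state unchanged) and local SVD truncations. A single truncation with discarded weight $\dw$, performed in mixed canonical form, costs by Lemma~\ref{lem:discarded_weight} at most $\norm{X-\tilde X}_2^2 = 2(1-\sqrt{1-\dw^2}) \le 2\dw^2$ on the purification, hence $\norm{X-\tilde X}_2 \le \sqrt2\,\dw$; feeding this into Lemma~\ref{lem:2normdist}, Eq.~\eqref{eq:Fnorm2trnorm}, gives $\norm{\rho-\tilde\rho}_1 \le \sqrt2\,\norm{X-\tilde X}_2 \le 2\dw \le 2\dw_{\max}$ in trace norm. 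Because CPTP maps are contractions in trace norm, these per--truncation errors accumulate only additively along the run: a straightforward induction gives $\norm{\rho_t^{\mathrm{Trot}} - \tilde\rho_t}_1 \le 2\dw_{\max}\cdot(\text{number of truncations})$. A second--order sweep over the chain consists of $2m+1$ effective layers (the two half--odd layers of consecutive time steps merge into one), and each layer enlarges---and hence requires re--truncating---$O(N)$ indices (one bond and two Kraus indices per nearest--neighbour channel, of which there are $\sim N/2$ per layer, plus the canonicalization sweep); bookkeeping the constant gives $6(2m+1)N\dw_{\max}$. Adding the two bounds finishes the proof.

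The step I expect to be the real obstacle is getting the $N$--dependence and the constants right in the Trotter part: the crucial point is that locality makes $\dnorm{[\L_o,\L_e]}$ grow only \emph{linearly} in $N$ (only adjacent bond pairs contribute), which is exactly what keeps the accumulated Trotter error at $O(N^2)$ rather than $O(N^3)$; one must also verify that the chosen $m$ meets the smallness hypothesis of Lemma~\ref{lem:second_order}, and count the truncations per layer carefully enough to pin down the factor $6$. By contrast, the telescoping, the per--truncation trace--norm estimate, and the additive accumulation are routine once Lemmas~\ref{lem:2normdist}, \ref{lem:second_order} and \ref{lem:discarded_weight} are in hand.
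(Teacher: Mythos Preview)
Your proposal is correct and follows essentially the same route as the paper: triangle-inequality split into Trotter and compression contributions, the Trotter part via Lemma~\ref{lem:second_order} plus telescoping and the locality bound $\dnorm{[\L_o,\L_e]}\le 2(N-2)b^2$, and the compression part via Lemma~\ref{lem:discarded_weight} combined with Lemma~\ref{lem:2normdist} and CPTP contractivity. The only organisational difference is that the paper first accumulates the $2$-norm errors of all truncations within a layer by the triangle inequality on the purification and then applies Lemma~\ref{lem:2normdist} once per layer, whereas you convert each truncation to trace norm immediately; both orderings yield the same bound. On the truncation count, the paper simply bounds by $3$ virtual indices per local tensor times $N$ tensors per layer (hence $3N$), which is looser but cleaner than your per-channel bookkeeping and directly gives the factor $6(2m+1)N$.
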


Before we come to the proof, we briefly discuss the theorem. It provides an upper bound on the error of our simulation scaling polynomially in all parameters. As our simulation is done in purification space, we are able to measure the error in trace norm, which is the operationally relevant norm as it provides a uniform error bound for all expectation values. The most crucial parameter in our error bound is the discarded weight $\dw$. Its value is determined during the runtime of the simulation. Providing good apriori estimates for $\dw$ would show that Markovian dynamics can be simulated efficiently in the system size and remains an open research question. This would, e.g., also show that stationary states of rapidly mixing local Liouvillians \cite{Mixing} can be computed efficiently by our algorithm.

Of course, the error bound of the theorem is a worst-case estimate. The bound \eqref{eq:commutator_bound} and the uniform bound on the discarded weights are usually not tight and can be evaluated explicitly, e.g., during the runtime of the algorithm.

In addition, inequality \eqref{eq:fidelity_trace_norm} can be used to obtain the identical error bound in terms of the fidelity.

\begin{proof}[Proof of theorem \ref{thm:certificate}]
Let $\rho$ be the initial state, $t=\Delta t/m$ the simulation time and consider the local Liouvillian \eqref{eq:local_Liouvillian} generating the time evolution with nearest neighbor Liouvillians $\L^{[l,l+1]}$.
The goal is to bound the trace-norm difference between the exact time evolved state
\begin{equation}
 \rho_t \coloneqq \e^{t \L}(\rho)
\end{equation}
and the output state of our simulation scheme $\tilde \rho_t$. This is done in two steps: first we approximate $\e^{t \L}$ with its Trotter-Suzuki approximation $(T^\Trot_{\Delta t})^m$ and in the second step we approximate $(T^\Trot_{\Delta t})^m (\rho)$ by the output $\tilde \rho_t$ of our algorithm.

The Trotter-Suzuki approximation is, as in Eq.~\eqref{eq:Trotter_even_odd},
\begin{equation}
 T^\Trot_{\Delta t}
 \coloneqq \e^{\Delta t\, \mathcal{L}_o /2} \e^{\Delta t\, \mathcal{L}_e} \e^{\Delta t\, \mathcal{L}_o /2} ,
\end{equation}
where $\L_e$ and $\L_o$ are the even and odd Liouvillian terms \eqref{eq:L_even_odd}. Lemma~\ref{lem:second_order} implies that
\begin{align}
 \dnorm{\e^{\Delta t\, \L} - T^\Trot_{\Delta t} }
 &\leq
 \frac{\Delta t^3}{3} \dnorm{[\L_o,\L_e]} \, \left( \dnorm{\L_e} + \kw 2 \dnorm{\L_o} \right)\, .
\end{align}
Moreover,
\begin{align}
 \dnorm{\L_e} + \kw 2 \dnorm{\L_o}
 &\leq
 \sum_l \left(\dnormb{\L^{[2l,2l+1]}} + \kw2 \dnormb{\L^{[2l-1,2l]}}\right)\nonumber
 \\
 &\leq
\frac{3 N}4 b
\end{align}
and
\begin{align}
 \dnorm{[\L_o,\L_e]}
 &\leq
 \sum_{l=1}^{N-2} \dnormb{\bigl[\L^{[l,l+1]}, \L^{[l+1,l+2]} \bigr]}\nonumber
 \\
 &\leq 2(N-2)b^2 \, .\label{eq:commutator_bound}
\end{align}
Together, this yields
\begin{equation}
 \dnorm{\e^{\Delta t\, \L} - T^\Trot_{\Delta t} }
 \leq
 \frac{\Delta t^3}{2} N^2 b^3 \, .
\end{equation}
Note that for any sub-multiplicative norm $\norm{\argdot}$
\begin{align}
\normn{AB-\tilde A \tilde B}\nonumber
&\leq
\normn{AB-A \tilde B} + \normn{A\tilde B -\tilde A \tilde B}
\\&
\leq
\normn{A}\normn{B-\tilde B} + \normn{A-\tilde A}\norm{B} \, .\label{eq:triangle_inequality}
\end{align}
Next, we use the fact that quantum channels are normalized in diamond norm and iterate Eq.~\eqref{eq:triangle_inequality} to obtain
\begin{align}
  \dnorm{\e^{t \L} - (T^\Trot_{\Delta t})^m }
  &\leq
  \frac{m\,\Delta t^3}{4}  N^2 b^3\nonumber
  \\
  &=
  \frac{(tb)^3 N^2}{4m^2} \, . \label{eq:TZ-error_in_pf}
\end{align}
Now we come to the second step and bound the error due to the approximation of
$(T^\Trot_{\Delta t})^m (\rho)$ with the state $\tilde \rho_{t}$ obtained after the compression steps of our algorithm. Note that
\begin{equation}\label{eq:TZ-circuit}
(T^\Trot_{\Delta t})^m
 =
 \e^{\Delta t\, \L_o /2}
 \left(\e^{\Delta t\, \L_e} \e^{\Delta t\, \L_o} \right)^{m-1}
 \e^{\Delta t\, \L_e} \e^{\Delta t\, \L_o /2}  \, ,
\end{equation}
i.e., $2m+1$ \emph{layers} $\e^{\Delta t \, \K^{(k)}}$ with $\K^{(k)} \in \{\L_e, \L_o, \L_o/2 \}$ and $k \in [2m+1]$ need to be applied to $\rho$.
In the algorithm, the initial state $\rho = XX^\dagger$ needs to be given in terms of the purification operator $X$ from Eq.\ \eqref{eq:X_in_terms_of_A}. With $\tilde X^{(0)} \coloneqq X^{(0)} \coloneqq X$ we denote the initial purification operator, by $X^{(k)}$ the updated purification operators, and by $\tilde X^{(k)}$ the compressed updated purification operators. The corresponding local variational tensors are denoted by $A^{(k)[l]}$ and $\tilde A^{(k)[l]}$, respectively (as in Eq.\ \eqref{eq:X_in_terms_of_A}). In the algorithm, the following is done for layers $k=1, 2, \dots, 2m+1$: In the $k$-th step, first $\tilde X^{(k-1)}$ is updated by the $k$-th layer to obtain $X^{(k)}$ and, second, for $l=1,2,\dots, N$ and $j=1,2,3$ the $j$-th virtual index of the local tensor $A^{(k)[l]}$ of $X^{(k)}$ is compressed to $\tilde A^{(k)[l]}$ by discarding a weight $\delta_{k,l,j}$. Due to the triangle inequality the two norm errors introduced by the discarded weights can at most sum up. With Lemma~\ref{lem:discarded_weight} this yields,
\begin{equation}\label{eq:diff_Xk}
 \normb{X^{(k)} - \tilde X^{(k)}}_2 \leq  \sum_{l=1}^N \sum_{j=1}^3
 \left({2(1-\sqrt{1-\dw_{k,l,j}^2})}\right)^{1/2}\, .
\end{equation}
Now let $\rho^{(k)}$ be the state obtain by applying the first $k$ layers successively to $\rho$, so that $\rho^{(2m+1)} = (T^\Trot_{\Delta t})^m(\rho)$, see Eq.~\eqref{eq:TZ-circuit}. Then, using that $\normb{\e^{\Delta t \K^{(k)} }(Y)}_1 \leq \norm{Y}_1$ for any operator $Y$,
the triangle inequality, and Lemma~\ref{lem:2normdist}, we obtain that
\begin{widetext}
\begin{align}
 \normb{\rho^{(k)} - X^{(k)}X^{(k)\dagger}}_1
 &=
 \norm{\e^{\Delta t \K^{(k)} } \Bigl(\rho^{(k-1))} - \tilde X^{(k-1)} \tilde X^{(k-1)\dagger}\Bigr)}_1
 \\ \nonumber
 &\leq
 \norm{\rho^{(k-1))} -  \tilde X^{(k-1)} \tilde X^{(k-1)\dagger} }_1
 \\ \nonumber
 &\leq
 \norm{\rho^{(k-1))} - X^{(k-1)} X^{(k-1)\dagger} }_1
  +\norm{X^{(k-1)} X^{(k-1)\dagger}-\tilde X^{(k-1)} \tilde X^{(k-1)\dagger}}_1
  \\ \nonumber
 &\leq
 \norm{\rho^{(k-1))} - X^{(k-1)} X^{(k-1)\dagger} }_1
  + \sqrt{2} \norm{ X^{(k-1)} - \tilde X^{(k-1)}}_2 \, .
\end{align}
Iterating this inequality and using Eq.~\eqref{eq:diff_Xk} yields
\begin{align}
 \norm{(T^\Trot_{\Delta t})^m(\rho) - X^{(2m+1)}X^{(2m+1)\dagger}}_1
 \leq
 2\sum_{k=1}^{2m+1}\sum_{l=1}^N \sum_{j=1}^3 \left({(1-\sqrt{1-\dw_{k,l,j}^2})}\right)^{1/2} , .
\end{align}
We use the bound $\dw_{k,l,j} \leq \dw_{\max{}}$, the triangle inequality, Eq.~\eqref{eq:TZ-error_in_pf}, and that
$({(1-\sqrt{1-\dw^2})})^{1/2} \leq \dw$ for $\dw \in [0,1]$ to obtain
\begin{align}
 \norm{\e^{t\L}(\rho) - X^{(2m+1)}X^{(2m+1) \dagger}}_1
 &\leq
 \frac{(tb)^3 N^2}{4m^2} + 6 (2m+1) N \left({(1-\sqrt{1-\dw^2})}\right)^{1/2}\nonumber
 \\
 &\leq
\frac{(tb)^3 N^2}{4m^2} + 6 (2m+1) N \dw \, .
\end{align}
\end{widetext}
\end{proof}

\begin{figure}[ht]
 \begin{center}
\includegraphics[height=5.5cm,angle=-90]{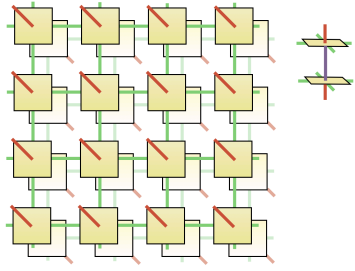}
 \end{center}
\caption{ \label{fig:2d}
The tensor networks of positive PEPO (projected entangled pair operators) to capture two-dimensional dissipative
quantum systems. The small image on the right hand side represents a single local purification, depicting
both the edges taking the bond and Kraus dimensions.
}
\end{figure}
\subsection{Prospects for the simulation of two-dimensional open systems}\label{sm:2d}
Our ansatz class of locally purified states generalizes to two spatial dimensions in a similar way as MPS are generalized by PEPS by introducing additional bond indices for the additional neighbors, see Fig.\ \ref{fig:2d}.
The second order dissipative Trotter-Suzuki approximation and the update steps based on the local Kraus operators of the local channels also directly generalize to higher dimensions. For the most natural case of a nearest-neighbor model with two body interactions in two spatial dimensions, this procedure would require four Trotter-Suzuki layers to implement.

For the compression steps, the canonical representation is needed. As the 2D version of our tensor network states ansatz class is a direct generalization of PEPS, it also suffers from the same numerical issues, such as exact contractions being $\#\P$-hard in the worst case \cite{SchWolVer07}. However, in a similar way as promising numerical simulations are done with PEPS, we expect the 2D generalization of our simulations to lead to promising simulations. The generalization of such approximate sweeping methods seems straight-forward for the bond indices of the local tensors. However, the translation to Kraus-dimensions seems less obvious, but might still be possible via singular value decomposition.


\end{document}